\def \d { {\,\mbox{d}}}
\def \dT { \,\mbox{d}T}
\def \dx { \, \mbox{d}x}
\def \supp { {\mbox{supp}}}
\def \calA {\mathcal{A}}
\def \calC {\mathcal{C}}
\def \calE {\mathcal{E}}
\def \calI { {\mathcal I}}
\def \calN { {\mathcal N}}
\def \calS { {\mathcal S}}
\def \calT {\mathcal{T}}
\def \calV { \mathcal{V}}
\def \eps { {\varepsilon}}
\def \st {:\,}
\def \iid { {i.i.d.~}}
\def \rmP {\mathrm{P}}
\def \E { {\mathbb E}}
\def \g { {\,|\,}}
\def \p {\partial}
\def \equalindistribution { {\overset{d}{=}}}
\newcommand{\Exp}[1]{ \E\left\{ #1 \right\} }
\newcommand{\Trace}[1] { {\mbox{Trace}\left\{ #1 \right\}}}
\newcommand{\Var}[1]{ \mbox{Var}\left\{ #1 \right\} }
\def \Rone { {\mathbb R}}
\def \RK { {{\mathbb R}^K}}
\def \RN {{\mathbb R}^N}
\def \RM {{\mathbb R}^M}
\newcommand{\pushfwd}[1]{ {#1_{\#}}}
\newcommand{\Wishart}[2]{ Wishart(#1, #2) }
\newcommand{\InverseWishart}[2]{ InverseWishart(#1, #2) }
\def \accept { {\mathrm{Accept}}}
\def \Chat { {\hat{C}}}
\def \Dhat { {\hat{D}}}
\def \ess { {\mathrm{ESS}} }
\def \miness { {\min_n\left\{ \mathrm{ESS_n} \right\}}}
\def \Lhat { {\hat{L}}}
\def \Rhat { {\hat{R}}}
\def \swap { {\mathrm{Swap}}}
\def \W  { {\kappa}}
\def \prior { {p(x)}}
\def \logprior { {\log p(x)}}
\def \posterior { {p(x\g y)}}
\def \swapprob { {\rmP[\swap_{(r, r+1)}]}}
\def \logposterior { {\log p(x\g y)}}
\def \likelihood { {p(y\g x)}}
\def \RMN { {\Rone^{M\times N}}}
\def \RNN { {\Rone^{N\times N}}}
\def \Tmax { {T_{max}}}
\theoremstyle{plain}
\newtheorem{theorem}{Theorem}[section]
\newtheorem{lemma}{Lemma}[section]
\newtheorem{proposition}{Proposition}[section]
\theoremstyle{remark}
\providecommand{\keywords}[1]
{
  \small	
  \textbf{\textit{Keywords---}} #1
}
\begin{document}

\title{Hamiltonian Monte Carlo in Inverse Problems; Ill-Conditioning and Multi-Modality.}
\author{
Langmore, I.~\footnote{\texttt{ianlangmore@gmail.com}}
\quad Dikovsky, M.
\quad Geraedts, S
\quad Norgaard, P
\\
\quad von Behren, R.
\\
Google Research
}

\maketitle
\date

\begin{abstract}
  The Hamiltonian Monte Carlo (HMC) method allows sampling from continuous densities. Favorable scaling with dimension has led to wide adoption of HMC by the statistics community. Modern auto-differentiating software should allow more widespread usage in Bayesian inverse problems. This paper analyzes two major difficulties encountered using HMC for inverse problems: poor conditioning and multi-modality. Novel results on preconditioning and replica exchange Monte Carlo parameter selection are presented in the context of spectroscopy. Recommendations are given for the number of integration steps as well as step size, preconditioner type and fitting, annealing form and schedule. These recommendations are analyzed rigorously in the Gaussian case, and shown to generalize in a fusion plasma reconstruction.
\end{abstract}

\keywords{Inverse Problems, Markov Chain Monte Carlo, Preconditioning, Replica Exchange, Parallel Tempering}


\section{Introduction}
\label{section:introduction}

The goal of Bayesian inverse problems is to produce and characterize $\posterior$, the \emph{posterior} distribution over possible state variables $X$, given measurements $y$. In principle, samples from the posterior can be used to determine the mean, quantiles, and other relevant statistics. These samples can be obtained using Markov Chain Monte Carlo sampling, which requires only that the log density, $\logposterior$ (assumed absolutely continuous with respect to Lebesgue measure), be available as a function (up to an additive constant). In practice, extracting samples can take prohibitively long, so people often resort to point estimates.

The most common Monte Carlo setup is \emph{random walk Metropolis-Hastings}. This requires, once burnt-in, $O(N)$ evaluations of $\logposterior$ for each effective sample $X\in \RN$. A more favorable scaling is obtained using \emph{Hamiltonian Monte Carlo}, or HMC, which requires only $O(N^{1/4})$ evaluations of $\nabla_z\logposterior$ \cite{Neal2011-gq}.
Due to the $O(N^{1/4})$ scaling, HMC has seen wide acceptance in the statistics community.
However, thus far, HMC has seen only minimal usage in the world of inverse problems \cite{Fichtner2019-ef,Bui-Thanh2014-vk,Nagel_Joseph_B2016-bu}. The barriers to acceptance are real: The gradient evaluations, $\nabla_z\logposterior$, required by HMC, translate to derivatives through a forward model.  This task is made easier by recent advancements in auto-differentiating software. The next two barriers are geometrical, and appear in somewhat predictable ways in inverse problems.  In particular, the second barrier is ill-conditioned posterior covariance, often induced by low rank and/or low noise forward models. These introduce a large multiplier to the $O(N^{1/4})$ scaling that must be dealt with.
Another barrier is that of posterior mass separated by regions of extremely low density. When present, this \emph{multi-modality} causes such difficulty that the highest priority is crossing these low density regions.

Our efforts to mitigate ill-conditioned covariance led to reparameterization and linear preconditioning. We contribute rigorous analysis of reparameterization, diagonal, and ``full covariance'' preconditioning, including an algorithm to select burn-in size.
\emph{Replica Exchange Monte Carlo} (abbreviated REMC, also known as \emph{parallel tempering}), is used to deal with multi-modality. Novel criteria for selecting annealing form and schedule, number of integration steps as well as step size are given. In all cases, a generic Gaussian problem is used to extract concrete recommendations, which we then test on a spectroscopy-based inversion.

Our perspective is the result of Google's ongoing work in reconstruction of (fusion) plasma state \cite{Dikovsky2021-cm}. In this industrial research setting, new experimental data arrives daily. Reconstructions must be done, and re-done, for \emph{thousands} of experiments.  Unexpected artifacts may appear in new measurements that are not well represented by the current model.  Physicists modify models weekly and need to understand changes. Sampling code must work well in the majority of reconstructions, and optimal tuning in each situation is not possible.  For that reason, we emphasize the use of simple scaling laws and crude algorithmic decisions over intricate methods. 
With this perspective, we add to the body of inverse problems oriented HMC literature.
See \cite{Fichtner2019-ef,Bui-Thanh2014-rg,Nagel_Joseph_B2016-bu,Conway2018-sk} for some applications.
See also \cite{Beskos2011-kg} for reparameterizations of HMC relevant to inverse problems in high dimension.
See \cite{Au2020-hc} for a lift-and-project approach applicable to inverse problems with low noise.
An introduction to HMC can be found in \cite{Neal2011-gq}. For those with some HMC experience we recommend \cite{Betancourt2017-rh}.
An introduction to Bayesian inverse problems can be found in \cite{Calvetti2018-gg}, and an overview of Bayesian modeling in statistics in \cite{Gelman2020-jm}.

Section \ref{section:hmc-step-size} is a prerequisite for sections \ref{section:sampling-with-correlations} and \ref{section:sampling-with-multi-modality}. All other sections can be read independently.

Section \ref{section:geometry-of-ip} describes how common features of inverse problems can lead to difficult sampling situations. Section \ref{section:hmc-step-size} reviews HMC. Section \ref{section:sampling-with-correlations} discusses preconditioning. Section \ref{section:sampling-with-multi-modality} discusses REMC.

\section{Some features found in inverse problems}
\label{section:geometry-of-ip}

In the Bayesian inverse problems setup we are interested in, the unknown $X$ parameterizes a physical quantity of interest. For each unique $X$, a \emph{single} measurement $Y\in\RM$ is observed. The likelihood is chosen as a (hopefully very accurate) representation of the data generating processes. Often, little to no ground truth examples of $X$ are available.  A different situation is often found in the statistics community, where many independent $Y$ are observed for the one and only $X$. For example, $X$ could be a coefficient of effectiveness of a drug treatment, and $Y$ the outcome (recovery or not). The likelihood is often then chosen as a mathematically convenient representation of the distribution of possible $Y$, given $X$  \cite{Gelman2020-jm}. Often, no attempt is made to describe details of the process by which $Y$ emerges. Both of these contrast with the so-called ``generative models'' popular in the machine learning community. There, the usual goal is to learn a probabilistic relationship $X\to Y$ so as to generate new $Y$ \cite{Goodfellow2016-lc}. The likelihood there is often taken as a mathematically convenient and extremely flexible function with \emph{learned} parameters.

These differences lead to a different prioritization of challenges for Bayesian inverse problems.
The lack of flexibility in choosing a likelihood means we must deal with the covariance structure imposed by the measurement. The need for physically realistic representations of $X$, combined with often sparse measurements, means we often use parameterizations leading to multiple local maxima of $\posterior$. Contrast this with the ``generative models'' world. There, a deep network can often transform a simple posterior to the observed $Y$, thus reducing the need to model correlations \cite{Farquhar2020-ex}. Moreover, since recovering $X$ is a non-goal, being stuck in one of many modes is okay, so long as the resultant distribution over $Y$ is the same.

\subsection{Poorly conditioned posterior covariance}
\label{subsection:covariance-structures}
Consider the toy problem
\begin{align}
  \label{align:gaussian-inverse-problem}
  \begin{split}
    \prior &\propto \exp\left\{ -\frac{1}{2}x^T C_{pr}^{-1}x \right\},
    \quad \likelihood \propto \exp\left\{ -\frac{1}{2\sigma^2}\|Ax - y\|^2 \right\}.
  \end{split}
\end{align}
where $A\in \RMN$ is the \emph{forward matrix}, mapping the unknown state variable $X$ to the measurement space $\RM$. The prior correlation, $C_{pr}$, is the discretization of \eqref{align:prior-covariance}.
Collecting terms quadratic in $x$, the posterior covariance is
\begin{align}
  C &= \left[ C_{pr}^{-1} + \sigma^{-2}A^TA \right]^{-1}.
  \label{align:gaussian-posterior-covariance}
\end{align}
Suppose the forward matrix $A$ is low rank, and so only constrains $M < N$ directions in $\RN$. For small $\sigma$, these constrained directions will have small posterior variance asymptotic to $\sigma^2$. The other directions will have variance governed by $C_{pr}$. The result is poorly conditioned posterior covariance.  See figure \ref{fig:gaussian-posterior-covariance-spectra}. In that example, sampling is about 92 times less efficient than if the posterior was ideally conditioned. Section \ref{subsection:hmc-step-size-scaling-laws} details the relationship between sampling efficiency and conditioning.
\begin{figure}[h]
  \begin{center}
    \includegraphics[width=\textwidth]{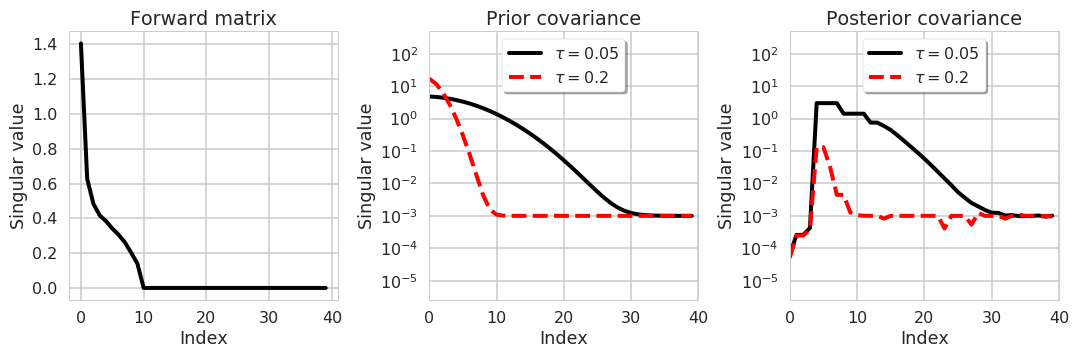}
  \end{center}
  \caption{
  {\bf Singular values of Gaussian problem:}
  Spectra of relevant matrices in \eqref{align:gaussian-inverse-problem}.
  The low rank forward matrix and/or long prior correlation length results in poor conditioning.
  In all cases, $M=20$, $N=40$, and the prior shrinkage $\delta=0.001$.
  We sort spectra by their corresponding eigenvectors' (Pearson) correlation with prior eigenvectors.  Frequency increases left to right.
  {\bf Left:} The forward matrix $A\in\RMN$.
  {\bf Center:} Prior covariance for two different values of prior correlation length $\tau$, c.f.~\eqref{align:prior-covariance}.
  {\bf Right:} Posterior covariance \eqref{align:gaussian-posterior-covariance} is poorly conditioned.
  }
  \label{fig:gaussian-posterior-covariance-spectra}
\end{figure}

\subsection{Multi-modal posteriors}
\label{subsection:multi-modal-posteriors}
The posterior corresponding to the linear problem from section \ref{subsection:covariance-structures} will always have exactly one local maximum, or \emph{mode}. In practice, non-linear parameterizations of state can easily lead to multiple modes. These tend to arise in an attempt to model details that are not fully constrained by the data.

A multi-modal toy model is a Gaussian prior for $X\in\RN$ and mixture of Gaussians likelihood corresponding to $M\leq N$ measurements.
\begin{align}
  \label{align:multi-model-toy-model}
  \begin{split}
    p(x) &\propto \prod_{n=1}^N e^{-x_n^2/2},
    \quad
    p(y\g x) \propto \prod_{m=1}^M \left[ e^{-(x_m-1)^2/(2\sigma^2)} + e^{-(x_m+1)^2/(2\sigma^2)} \right].
  \end{split}
\end{align}
This leads to posterior density
\begin{align*}
  p(x\g y) &\propto
  \left( \prod_{m=1}^M \left[ e^{(x_m - \mu(\sigma))^2 / (2 v(\sigma)^2)} +  e^{(x_m + \mu(\sigma))^2 / (2 v(\sigma)^2)} \right]
  \right)
  \left( \prod_{n=M+1}^N e^{-x_n^2/2}\right),\\
  \mu(\sigma) :&= 1 + \sigma^2,\qquad
  v(\sigma)^2 := \sigma^2 / (1 + \sigma^2).
\end{align*}
When $\sigma\ll1$, standard HMC will find itself stuck in any of the $2^M$ modes.  The probability of escaping one of these modes is, to leading order, less than $\exp\left\{ - (1 + \sigma^2)/\sigma^2\right\}$. See section \ref{subsection:conductance-basics}.

\subsection{Model problem: Reconstruction from spectroscopy}
\label{subsection:model-problem}
Our model problem is the inference of amplitude, temperature, and velocity of an ion species from an emission spectrum. Photons are emitted from an N-point discretization of the square $[-1,1]\times[-1,1]$, and line integrated emission is measured. Some details are given, so as to emphasize the ability of HMC to handle non-Gaussian posteriors. Other sections do \emph{not} require understanding of these details.

For a set of (around 200) frequencies $\nu$, we parameterize emissivity as a transformation, $\varphi_\nu$, of an \emph{a-priori} Gaussian random variable $X\in\RN$, $X\sim\calN(0, C_{pr})$. At each frequency $\nu$, an integration matrix $\calI\in\RMN$ projects the emissivity $\varphi_\nu(X)$ onto $M$ measurements. This models $M$ distinct view \emph{chords} in the $[-1,1]\times[-1,1]$ square (Fig.~\ref{fig:model-problem-basics}).
\begin{align*}
  Y_\nu = \calI\varphi_\nu(X) + \eta_\nu,
\end{align*}
The noise $\eta_\nu\in\RM$ is, conditional on $X$, Gaussian, and independent for every frequency and chord.  It takes the form
\begin{align*}
  \eta_{\nu, m} \sim \calN(0, \sigma^2 + \sigma_p^2 [\calI\varphi_\nu(X)]_m^2).
\end{align*}
The factor $\sigma$ gives rise to the usual additive independent noise. $\sigma_p$ gives us noise proportional to the signal, representing model error. This leads to the likelihood
\begin{align}
  \label{align:spectroscopic-likelihood}
  \begin{split}
    p(y\g x) &=
    \prod_{m=1}^M
    \frac{\exp\left\{ -\frac{1}{2\sigma(x)^2}(y_m - \left[\calI\varphi_\nu(x)\right]_m)^2 \right\}}{\sqrt{2\pi \sigma(x)_m^2 }},\\
    \sigma(x)_m^2 :&= \sigma^2 + \sigma_p^2\left[ \calI\varphi_\nu(x) \right]_m^2.
  \end{split}
\end{align}

The mapping $\varphi_\nu$ is composed of two stages. First, $X$ is divided into independent amplitude, temperature, and velocity components; $X=(X_\calA, X_\calT, X_\calV)$. These are mapped to amplitude $\calA$, temperature $\calT$, and velocity $\calV$.  The mapping is either \emph{slab}, meaning constant on chord-aligned rectangles, or \emph{shell}, meaning constant on circles of rotation, about a shifted center (Fig.~\ref{fig:model-problem-basics}).
\begin{figure}[H]
  \begin{center}
    \includegraphics[width=\textwidth]{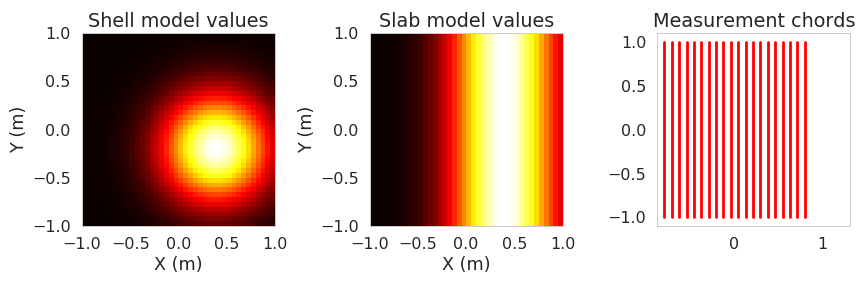}
  \end{center}
  \caption{
  {\bf Left:} Example shell model values for arbitrary plasma state variable (amplitude, temperature, or density). They are radially smooth and circularly symmetric, about a shifted center.
  {\bf Center:} Example slab model values are constant along rectangles aligned with the measurement chords.
  {\bf Right:} 20 rays through a disk shaped body, representing chords of measurement.
  }
  \label{fig:model-problem-basics}
\end{figure}

To make amplitude and temperature positive, we use the \emph{softplus} function; $\calS(u) := \log[1 + exp\left\{ u \right\}]$.
\begin{align*}
  \calA :&= c_\calA \,\calS(\tilde X_\calA), \quad
  \calT := c_\calT \,\calS(\tilde X_\calT), \quad
  \calV := c_\calV \,\tilde X_\calV.
\end{align*}
Above, $(c_\calA, c_\calT, c_\calV)$ are scaling factors, and $\tilde X := (\tilde X_\calA, \tilde X_\calT, \tilde X_\calV)$ is a transformation of $X$ to impart radial correlation and a shifted center. That is, the amplitude, temperature, and velocity components of $\tilde X$ have radial values approximating a zero mean Gaussian random field with correlation
\begin{align}
  \Gamma_{pr}(r, r') :&= \exp\left\{ -\frac{|r - r'|^2}{2\tau^2} \right\} + \delta.
  \label{align:prior-covariance}
\end{align}
Above, $\delta>0$ is a \emph{shrinkage} parameter preventing small eigenvalues from creating numerical difficulties. The coordinate $r$ is the radius about a center shifted by an \emph{a-priori} 2-D Normal.

The second step of $\varphi_\nu$ is the dimensionless discrete emissivity spectrum.
\begin{align}
  \label{align:emissivity-equation}
  \calE_\nu(\calA, \calT, \calV) :&= \frac{\calA}{\sqrt{2 \pi} w} \exp \left( \frac{-(\nu^{-1} - \tilde\nu^{-1})^2}{2 w^2} \right),
\end{align}
where $\tilde\nu = \nu_0 / (1 - \calV)$ is the Doppler shifted center frequency, and $w = \sqrt{\calT} / \nu_0$ is the Doppler broadened bandwidth.

In its shell parameterization, the spectroscopy problem can suffer from ill-conditioned posterior covariance.
The Gaussian toy problem \eqref{align:gaussian-inverse-problem} is a linearization, achieved by assuming amplitude $\calA$ is the only unknown, ignoring the softplus, and setting $\sigma_p=0$ and $c_\calA=1$.

Multi-modality can also occur. Suppose the measured emissivity $Y_\nu$ has two spectral peaks at $\nu_1$ and $\nu_2$, as a result of Carbon III and an additional, unexpected, pollutant. As a result, we are trying to fit two peaks with a model capable of producing only one.  This results in five different posterior modes (Fig.~\ref{fig:multiple-spectral-modes}).
\begin{figure}[h]
  \begin{center}
    \includegraphics[width=\textwidth]{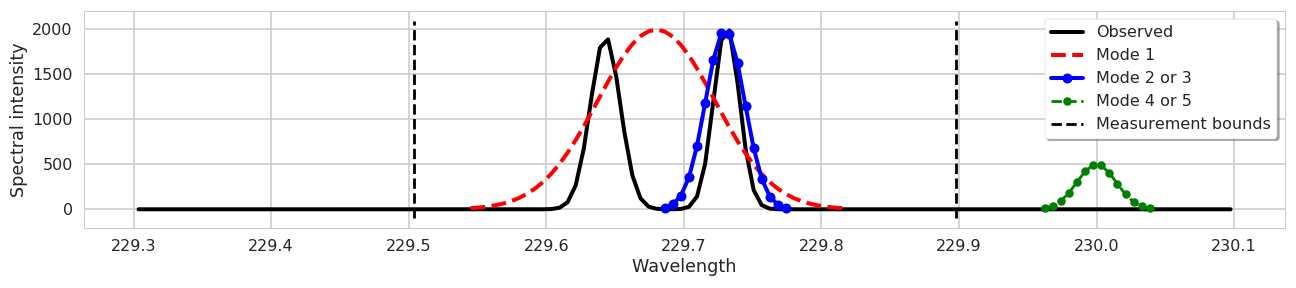}
  \end{center}
  \caption{
  {\bf Multiple modes:}  Observed 2-peak measurements, along with hypothetical \emph{posterior predictive} samples ($Y\sim p(y\g X)$ where $X\sim\posterior$) demonstrating different posterior modes.
  {\bf Mode 1:} If the temperature prior allows for large values, and/or the modeled noise is large enough, a local maximum is a hot plasma with bulk velocity near zero, giving rise to a single wide spectra that covers both peaks.
  {\bf Mode 2 or 3:} The next two modes arise if the temperature is a-priori small enough, with velocity values either positive or negative. These correspond to modeled spectral peaks that cover one of the measured peaks.
  {\bf Mode 4 or 5:} If the velocity prior is permissive, and the amplitude prior discourages tiny values, we have two more local maxima. These have velocity values so large that the modeled peak is shifted far to the left or right, outside the measured range of frequencies.
  }
  \label{fig:multiple-spectral-modes}
\end{figure}
Changing velocity determines the Doppler shift, and therefore the mode a sample is closest to. In a neighborhood of each mode, the log-posterior is concave. In this way our spectroscopy problem is similar to the toy model \eqref{align:multi-model-toy-model} with the number of velocity variables, $M$, equal to $N/3$.

\section{Review of Bayesian inference using MCMC}
\label{section:hmc-step-size}

Common to all Bayesian inversion is the goal of inferring a probabilistic description of an unobserved quantity $X\in\RN$, given observed $Y\in\RM$. This \emph{posterior}, $\posterior$, is proportional to the product of the \emph{prior} $\prior$ and the \emph{likelihood} $\likelihood$.  That is, $\posterior \propto \prior\likelihood$. The goal of this inference can be point estimates, such as $\Exp{X\g Y=y}$, or quantification of uncertainty, such as $\Var{X\g Y=y}$.

Markov Chains have been used since 1953 as a means to generate sequences of samples $X^1, X^2,\ldots$ from target distributions \cite{Robert2011-er}. These samples can be viewed by themselves, or used to approximate expectations $\Exp{f(X)}\approx S^{-1}\sum_{s=1}^S f(X^s)$.

\subsection{Statistical efficiency}
\label{subsection:statistical-efficiency}
Samples from the Markov Chain have two major deficiencies. First, only in the limit $k\to\infty$ can we say that $X^k\sim\pi$.  Often, a chain is initialized with a draw from the prior, and an initial \emph{burn-in} period is used to allow the chain to ``forget'' this initial position and move toward the \emph{typical set} of the posterior \cite{Betancourt2017-od}.
Second, draws are correlated.  Roughly speaking, we say that identically distributed samples $\left\{ X^1,\ldots,X^K \right\}$ have \emph{effective sample size} (ESS) $S$ if the variance of $K^{-1}\sum_{k=1}^K X^k$ is $\Var{X^1}/S$. In other words, if, for purposes of computing the mean, they are ``as effective'' as $S$ \iid samples.  Usually $S \leq K$.  The most straightforward computation of ESS assumes the $\{X^k\}$ are identically distributed but correlated draws from the target, with mean $\mu$ and variance $\nu^2$. Then, for dimensions $n=1,\ldots,N$, ESS is built from the $t-$lag autocorrelation $\rho_t$,
\begin{align*}
  \ess_n :&= \frac{K}{1 + 2\sum_{t=1}^\infty \rho_t},\qquad \rho_t := \frac{\Exp{(X_n^1 - \mu_n)(X_n^{1+t} - \mu_n)}}{\nu_n^2}.
\end{align*}
We are are usually most interested in the worst performing dimension, and therefore $\miness$.  Due to insufficient burn-in, or being stuck in local modes, the assumption that $X^k$ are draws from the target is often not satisfied. For that reason, our version of ESS will be lower if the mean differs between multiple chains. See appendix A of \cite{Vehtari2021-ai}. A more direct method of checking deviation between chains is the \emph{potential scale reduction}, or $\Rhat$ \cite{Vehtari2021-ai}.

\subsection{Computational efficiency}
\label{subsection:computational-efficiency}

In our experience, obtaining MCMC samples on production models is a computationally intensive task that can take prohibitively long.

Graphical processing units (GPUs) provide a significant speedup, as they excel at parallel (SIMD) operations. This allows concurrent running of multiple Markov Chains \cite{Lao2020-qi}. The slowdown as more chains are added is sub-linear, until GPU memory is exhausted.
We therefore only consider algorithms that take advantage of GPU parallelism.
In particular, our replica-exchange implementation uses the same number of leapfrog steps for each replica (see section \ref{subsubsection:selecting-number-of-leapfrog-steps-for-all-replicas}).

The majority of HMC computation time is spent in the numerical (leapfrog) integration of Hamilton's equations. Accordingly, we track the number of leapfrog steps per effective sample ($\miness$ actually) as a performance metric. This allows platform independent comparisons.

A secondary performance consideration is time spent compiling and optimization the computational graph \cite{Martin_Abadi2015-ju,Frostig_undated-rf}. This takes a fixed amount of time proportional to the graph complexity. In our model problem, where ill-conditioning and/or multi-modality was present, graph optimization was a small fraction of total runtime. In well-conditioned HMC sampling problems, avoiding many graph optimizations may be important.

\subsection{Description of HMC}
\label{subsection:description-of-hmc}
The Hamiltonian Monte Carlo (HMC) method was introduced in 1987 as ``Hybrid Monte Carlo'' for use in lattice field theory simulations \cite{Duane1987-eg}.  Since then, it has been recognized as an efficient alternative to random walk Metropolis-Hastings, well suited for higher dimensional problems.  Implementations are available for a variety of languages \cite{Carpenter2017-zd,Lao2020-qi,Salvatier2016-wx}. A comprehensive introduction to HMC can be found in \cite{Neal2011-gq}.

To sample $X\in\RN$, distributed according to a smooth (with respect to Lebesgue measure) density $\pi(x)$, HMC augments state space with a fictitious momentum $\xi\in\RN$.  This defines the joint density
\begin{align*}
  \pi(x,\xi) &\propto \exp\left\{ -H(x,\xi) \right\},\quad\mbox{where}\quad H(x,\xi) := - \log \pi(x) + \frac{\|\xi\|^2}{2},
\end{align*}
and $\|\xi\|$ is the Euclidean norm.
In the physics setting, the \emph{Hamiltonian} $H$ is total energy, whereas $-\log \pi(x)$, $\|\xi\|^2/2$ are potential and kinetic energies.
It is not uncommon to model the kinetic energy $\sqrt{\xi\cdot A^{-1}\xi}/2$, where $A\in\RNN$ is the \emph{mass matrix}.  However, as shown in section 4 of \cite{Neal2011-gq}, this is equivalent to the linear preconditioning $X\mapsto LX$ (where $A^{-1}=LL^T$) in conjunction with the Euclidean norm. Kinetic energy can be non-Gaussian e.g.~depend on position \cite{Girolami2011-wv}. See also \cite{Betancourt2017-od}.

Sampling proceeds by (a numerical approximation to) the following iteration from point $(x^j, \xi^j)$.
\begin{enumerate}
  \item Draw $\tilde \xi\sim\calN(0, I_N)$.
  \item Let $(x(t), \xi(t))$ be the time $t$ solution to the Hamilton's equations of motion:\, $\dot x = \xi$, $\dot \xi = \nabla \log \pi(x)$, with initial condition $(x^j, \tilde\xi)$.
  \item Set $(x^{j+1}, \xi^{j+1}) = (x(t), \xi(t))$, for integration time $t$.
\end{enumerate}
Each integration path lives on a single level set of the Hamiltonian. The resampling step $\tilde\xi\sim\calN(0, I_N)$ is necessary to jump between level sets, and is thus necessary for ergodicity \cite{Livingstone2016-pp}.

In practice, Hamilton's equations must be solved numerically over $\ell$ steps with step-size $h$. Denote this solution by $\Psi^\ell$.  The integration error means we can no longer just accept the move in step 3, which is replaced by a Metropolis correction:
\begin{align*}
  (x^{j+1}, \xi^{j+1}) &= \Psi^\ell,\quad\mbox{with probability } a( x^j, \xi^j \to \Psi^\ell),
\end{align*}
and
\begin{align*}
  (x^{j+1}, \xi^{j+1}) &= (x^j, \xi^j),\quad\mbox{with probability } 1 - a( x^j, \xi^j \to \Psi^\ell),
\end{align*}
for acceptance probability
\begin{align}
  \label{align:metropolis-step}
  a(x^j, \xi^j \to \Psi^\ell)
  :&= \min\left( 1, \exp\left\{ H(x^j, \xi^j) - H(\Psi^\ell) \right\} \right).
\end{align}
Since Hamilton's equations preserve the Hamiltonian, if numerical integration was perfect, $H(x^j, \xi^j) = H(\Psi^\ell)$ and every step would be accepted.  In practice, finite step size leads to some rejections and wasted effort.
We also note that for the Metropolis correction to be symmetric, the final momentum should be negated before evaluating $H$. This makes no difference since our Hamiltonian is symmetric in momentum.

The numerical integration is usually done with $\ell$ steps of the \emph{St\"ormer-Verlet} or \emph{leapfrog integrator}. This symplectic integrator ensures that the Hamiltonian does not diverge, provided $h$ is sufficiently small. See \cite{Hairer2006-gu}, theorem 8.1.

\subsection{HMC Step size scaling laws}
\label{subsection:hmc-step-size-scaling-laws}
Here we review existing scaling laws for HMC step size. These results always inform our choice of HMC step size and integration time. They are directly used in section \ref{section:sampling-with-correlations}.

The step size $h$, along with the number of leapfrog steps $\ell$, are two important parameters to choose.  Usually, $h$ is chosen to achieve some desired acceptance rate, and $\ell$ is set to the desired integration time divided by $h$.  If $h$ is too large, the average acceptance probability, $\rmP[\accept]$, will tend to zero. On the other hand, if $h$ is too small, then too many steps $\ell$ will be required, and computational effort will be wasted.
In \citep{Beskos2013-nd,Betancourt2014-kg}, $X\in\RN = \RK\times\cdots\RK$ with density $\pi(x_1,\ldots,x_{N/K}) = f(x_1)\cdots f(x_{N/K})$ is studied in the limit $N\to\infty$. This so-called \emph{\iid limit} yields the result that $h$ should be tuned until $0.6\leq \rmP[\accept] \leq 0.9$.

In the Gaussian case, we can extract more precise conclusions, resulting in useful analysis and algorithms that can be used, even in non-Gaussian problems.
Consider a target $X\sim\calN(0, C)$, where $C$ has eigenvalues $\lambda_1^2\geq\lambda_2^2\geq\cdots\geq\lambda_N^2>0$.  If we want the integration trajectories to travel a distance comparable to the largest scale $\lambda_1$, we must have $h\ell = O(\lambda_1)$. On the other hand, to avoid instability, we must have $h \leq 2\lambda_N$. This leads to a naive scaling $\ell = O(\lambda_1/\lambda_N)$. The ratio $\lambda_1/\lambda_N$ is the common, \emph{spectral}, condition number of any matrix $L$ such that $C=LL^T$.

The problem with the spectral condition number is that it takes into account only the largest and smallest scales. The largest scale does set the required integration length, but \emph{all} dimensions contribute to integration error.  A condition number taking these considerations into account was introduced in \cite{Langmore2019-lj}.
\begin{align}
  \label{align:kappa}
  \W(L) :&= \left( \sum_{n=1}^N\left( \frac{\lambda_1}{\lambda_N} \right)^4 \right)^{1/4}
  = \|L\|_2 \|L^{-1}\|_{S^4}.
\end{align}
Above, $\|\cdot\|_2$ is the \emph{spectral} norm, and $\|\cdot\|_{S^4}$ is the fourth \emph{Schatten} norm \cite{Horn1990-nf}. For matrix $A$ with singular values $\lambda_1\geq\cdots\lambda_N\geq0$, the $k^{th}$ Schatten norm is
\begin{align}
  \label{align:schatten-norm}
  \|A\|_{S^k} :&= \left[ \sum_{n=1}^N \lambda_n^k\right]^{1/k}.
\end{align}
In \cite{Langmore2019-lj}, it is shown that under some regularity conditions, a Gaussian density requires $O(\W)$ leapfrog steps for efficient sampling. $\W$ may be approximated using the largest scale $\lambda_1$, step size $h$, and average acceptance probability $\rmP[\accept]$.
\begin{align}
  \label{align:kappa-approximation-via-step-size}
  \W &\approx \frac{\lambda_1}{h} \, 2^{7/4}\sqrt{\Phi^{-1}\left( 1 - \frac{\rmP[\accept]}{2} \right)}.
\end{align}
Above, $\Phi$ is the normal cumulative distribution function. Thus, $\W$ is the number of steps needed to traverse the largest scale, $\lambda_1/h$, times a correction factor depending on acceptance probability. In non-Gaussian problems, we recommend using large (small) $\lambda_1/h$ as evidence of \emph{possible} of inefficient (efficent) sampling.

\subsection{Conductance in multi-modal HMC}
\label{subsection:conductance-basics}
Let $A\subset\RN$, and $c(A)$ the average probability of the chain escaping $A$ in one step. The \emph{conductance} of the chain, $\calC:=\min_{\pi(A)<1/2} \{c(A)\}$, quantifies its ability to escape local modes. In \cite{Mangoubi2018-hn}, it is shown that the conductance of HMC is no better than that of random walk Metropolis-Hastings. An upper bound is also given for conductance of HMC chains:
\begin{align}
  \label{align:conductance-bound}
  c(A) &\leq t\,\frac{1}{2} \frac{\int_{\p A}\pi(x)\dx}{\pi(A)}.
\end{align}
From \eqref{align:conductance-bound} it is clear that increasing integration time can increase conductance at most linearly. Since computational effort also increases linearly with $t$, this does not help overall.

Equation \eqref{align:conductance-bound} also shows that conductance can be constricted by a single low-probability (density) surface. For Gaussian noise models (e.g.~section \ref{subsection:model-problem}), this can lead to conductance decreasing exponentially in $1/\sigma^2$. Indeed, for the bi-modal normal $0.5\,\calN(-1, \sigma^2) + 0.5\,\calN(1, \sigma^2)$, the leading order term of conductance is proportional to $\exp\left\{ -1/(2\sigma^2) \right\}$. See \cite{Mangoubi2018-hn} theorem 3.

Tempering methods need to use a highest temperature $\Tmax$ such that the corresponding ``hottest density'' has non-vanishing conductance. See section \ref{subsubsection:selecting-replica-temperatures}.

\section{Preconditioning of the posterior covariance}
\label{section:sampling-with-correlations}

HMC (as well as standard Metropolis Hastings) works better when sampling from a unit Gaussian. The preconditioning techniques below sample a transformed variable $Z = F^{-1}(X)$, which (hopefully) looks more like a unit Gaussian. These samples $Z^j$ are transformed back into $X^j = F(Z^j)\sim p(x\g y)$.

\subsection{Transformation by diffeomorphism}
\label{subsection:transformation-by-diffeomorphism}
Here we write some relations involving pushforwards by diffeomorphisms. They will be needed in later sections.

Let us start with $X\sim p(\cdot\g y)$, and a diffeomorphism $F$, which transforms $X\mapsto Z = F^{-1}(X)$.  Equivalently, the density $p(\cdot\g y)$ is transformed by the \emph{pushforward} operator, $\pushfwd{F^{-1}}$, into $g$:
\begin{align}
  \label{align:pushfwd}
  g(z) &= (\pushfwd{F^{-1}}p(\cdot \g y))(z) := |\det(DF(z))|\, p(F(z)\g y).
\end{align}
Above, $DF$ is the matrix of partial derivatives, $(DF)_{ij} = \p F_j/\p z_i$.
Using HMC, we sample from the transformed density $g$, producing $Z^1,\ldots,Z^K$.  Transforming back, $X^k := F(Z^k)$, and we have samples from $p(\cdot\g y)$ as desired.

In the Gaussian case, $p(\cdot\g y)\sim\calN(\mu, C)$, with $C=LL^T$, the linear preconditioner induced by a matrix $F$ transforms $L\mapsto F^{-1}L$, and likewise the covariance and $\W$ as
\begin{align}
  \label{align:linear-preconditioning-transformations}
  \begin{split}
    LL^T & \mapsto (F^{-1}L)(F^{-1}L)^T = (F^{-1})LL^T(F^{-1})^T \\
    \W(L) & \mapsto \|F^{-1}L\|_2\|(F^{-1}L)^{-1}\|_{S^4}.
  \end{split}
\end{align}

\subsection{Preconditioning by prior-reparameterization}
\label{subsection:preconditioning-by-reparameterization}
If the support of $\posterior$ is bounded, sampling directly from it will suffer from boundary issues. It is therefore standard practice to transform $X\to Z := F^{-1}(X)$ (as in section \ref{subsection:transformation-by-diffeomorphism}) such that $\supp(Z)=\RN$. We know of no existing work analyzing the effects of this reparameterization on conditioning, so we include it here.

In our case, the prior $p(x)$ is a transformation of a standard Gaussian by a diffeomorphism $G$.  That is, $p(x) = (\pushfwd{G}\phi)(x)$.  This means $Z = G^{-1}(X)$ is a useful transformation.  If the prior results in difficult posterior covariance (see e.g.~section \ref{subsection:covariance-structures}), using $G$ as a preconditioner will often improve conditioning. To that end, note that
\begin{align}
  \label{align:reparameterization}
  \begin{split}
      \pushfwd{G^{-1}} p(x\g y)
    &= p(G(x)\g y)|\det(DG(x))|
    \propto \phi(x) p_Y(y\g G(x)).
  \end{split}
\end{align}
Thus, preconditioning with $G$ is equivalent to a reparameterization that uses a standard Gaussian prior, and inserts $G$ inside the likelihood (probably achieved by using $G(x)$ in the forward model). In our work, we sometimes explicitly preconditioned with $G$, and at other times reparameterized a unit normal.  Not only are these mathematically equivalent, but, due to caching in TensorFlow probability, they are computationally equivalent ($G\circ G^{-1}$ is replaced by the identity) \cite{Dillon2017-zq}. Doing one of these is necessary, as it removes much of the nonlinearity and allows the methods of section \ref{subsection:linear-preconditioning} to work efficiently.

Organizing code around reparameterizations has some advantages: First, the benefits of preconditioning by the prior are realized without the programmatic complexity of specifying a preconditioner. Second, the function $G$ does not need to be a diffeomorphism; see \cite{Dikovsky2021-cm} for an example.  On the other hand, there are benefits to letting the prior encode the state directly. First, in this setup, the MCMC variables being sampled are the state variables you care about. Second, one can use a prior \emph{not} easily described as a transformation of a Gaussian.

The sampling benefits/degradation of prior-reparameterization will vary. Consider the Gaussian example \eqref{align:gaussian-inverse-problem}. If $C_{pr} = L_{pr}L_{pr}^T$, then the reparameterized posterior covariance becomes
\begin{align}
  \left[ I + \sigma^{-2}(AL_{pr})^T(AL_{pr}) \right]^{-1}.
  \label{align:gaussian-posterior-covariance-reparam}
\end{align}
We see in figure \ref{fig:gaussian-posterior-covariance-spectra-reparam} that reparameterization can help when prior covariance is the major contributor to $\W$, but can hurt if low noise is the major source of small eigenvalues.

\begin{figure}[h]
  \begin{center}
    \includegraphics[width=\textwidth]{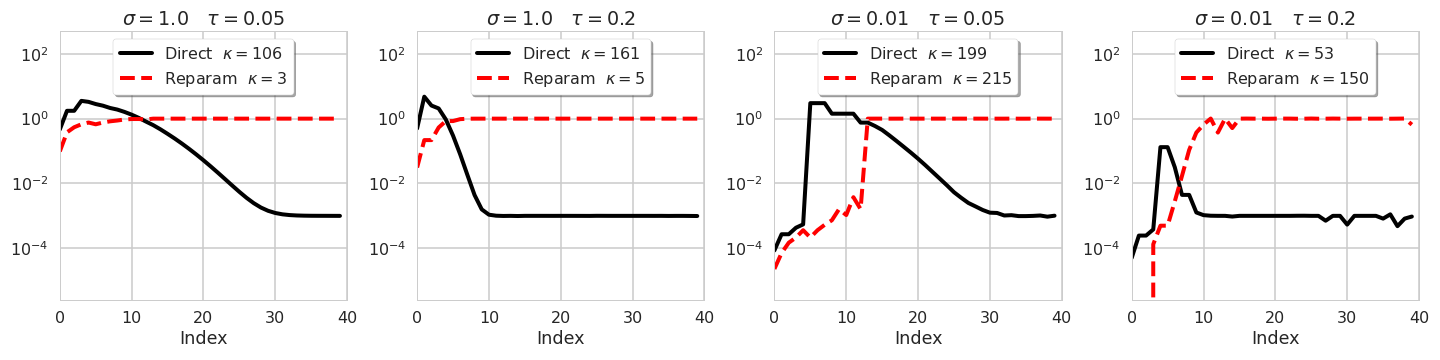}
  \end{center}
  \caption{
  {\bf Covariance spectra:} Eigenvalues, indexed by correlation with the prior eigenvectors.
  From the posterior before reparameterization \eqref{align:gaussian-posterior-covariance}, and after \eqref{align:gaussian-posterior-covariance-reparam}. Reparameterization ``lifts'' the high frequency posterior scales.
  Different noise levels $\sigma$ and prior correlation lengths $\tau$ result in reparameterization helping or hurting.
  In all cases, $M=20$, $N=40$, and the prior shrinkage $\delta=0.001$.
  The minimal condition number $\W$ (see \eqref{align:kappa}) is $40^{1/4}\approx 2.5$.
  }
  \label{fig:gaussian-posterior-covariance-spectra-reparam}
\end{figure}

\subsection{Linear preconditioning for Gaussian problems}
\label{subsection:linear-preconditioning}
Here we examine linear preconditioning of Gaussian distributions. This simplification allows for precise results and inspires techniques. These techniques are applied to non-Gaussian problems in section \ref{subsection:nongaussian-preconditioning-numerics}.

The setup here is like other diffeomorphisms (section \ref{subsection:transformation-by-diffeomorphism}), except we assume $X\sim \calN(0, C)$ is Gaussian, and the preconditioning transformation $F\in\RNN$ is linear. If $FF^T$ is a scalar multiple of $C$, then \eqref{align:linear-preconditioning-transformations} shows the transformed covariance is a multiple of the identity, which minimizes $\W$.
Subsequent sections will use approximations of this factor built from samples obtained during burn-in.

\subsubsection{Diagonal preconditioning}
A simple approximation of a covariance factor is the diagonal matrix $\Dhat$ made with sample standard deviations. This approximates $D$, the diagonal matrix of \emph{actual} standard deviations. The next three propositions are original, and illustrate why we like this preconditioner.

First, preconditioning with $D$ should work well in the diagonally dominant case.
The proof follows from the Gershgorin circle theorem \cite{Horn1990-nf}.
\begin{proposition}
  \label{proposition:diag-preconditioning-for-diagonally-dominant}
  Suppose $C=LL^T$ is diagonally dominant with, for every $i$, $\sum_{j\neq i}|C_{ij} / C_{ii}| \leq \delta < 1$.
  Then,
  \begin{align*}
    \W(D^{-1}L) \leq N^{1/4}\sqrt{\frac{1 + \delta}{1 - \delta}}.
  \end{align*}
\end{proposition}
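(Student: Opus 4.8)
The plan is to control the condition number $\W(D^{-1}L)$ by controlling the two factors in its definition, $\W(D^{-1}L) = \|D^{-1}L\|_2 \|(D^{-1}L)^{-1}\|_{S^4}$. Note that $(D^{-1}L)(D^{-1}L)^T = D^{-1}LL^TD^{-1} = D^{-1}CD^{-1}$, the correlation matrix $R$ associated with $C$. So the squared singular values of $D^{-1}L$ are exactly the eigenvalues of $R$, which has ones on the diagonal. The whole argument reduces to bounding the spectrum of $R$ away from $0$ and $\infty$.

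First I would invoke the Gershgorin circle theorem on $R$. Each Gershgorin disc is centered at $R_{ii} = 1$ with radius $\sum_{j\neq i}|R_{ij}| = \sum_{j\neq i}|C_{ij}/\sqrt{C_{ii}C_{jj}}|$. The hypothesis gives $\sum_{j\neq i}|C_{ij}/C_{ii}| \leq \delta$, so I need to relate $|C_{ij}/\sqrt{C_{ii}C_{jj}}|$ to $|C_{ij}/C_{ii}|$; using $|C_{ij}| \leq \sqrt{C_{ii}C_{jj}}$ (positive semidefiniteness of $2\times 2$ principal minors) gives $\sqrt{C_{jj}} \leq \sqrt{C_{ii}}\cdot\sqrt{C_{jj}/C_{ii}}$... actually the cleanest route is to observe that the diagonal dominance of $C$ with constant $\delta$ transfers to diagonal dominance of $R$ with the same constant $\delta$, since $R$ is just $C$ conjugated by a positive diagonal matrix and row-sum dominance of this form is scale-invariant in the appropriate sense — I would verify $\sum_{j\neq i}|R_{ij}| \leq \delta$ directly. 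Then every eigenvalue $\mu$ of $R$ satisfies $|\mu - 1| \leq \delta$, hence $1-\delta \leq \mu \leq 1+\delta$.

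With $\sigma_n^2(D^{-1}L) = \mu_n \in [1-\delta, 1+\delta]$ in hand, the two norms are immediate: $\|D^{-1}L\|_2 = \sqrt{\mu_{\max}} \leq \sqrt{1+\delta}$, and $\|(D^{-1}L)^{-1}\|_{S^4}^4 = \sum_n \mu_n^{-2} \leq N(1-\delta)^{-2}$, so $\|(D^{-1}L)^{-1}\|_{S^4} \leq N^{1/4}(1-\delta)^{-1/2}$. Multiplying gives $\W(D^{-1}L) \leq N^{1/4}\sqrt{(1+\delta)/(1-\delta)}$, as claimed.

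The only real subtlety — the step I would be most careful about — is the transfer of the diagonal-dominance constant from $C$ to $R$, i.e.\ checking that $\sum_{j\neq i}|R_{ij}| \leq \delta$. Since $R_{ij} = C_{ij}/\sqrt{C_{ii}C_{jj}}$, one has $|R_{ij}| = |C_{ij}|/\sqrt{C_{ii}C_{jj}} \le |C_{ij}|/C_{ii} \cdot \sqrt{C_{ii}/C_{jj}}$, which is not obviously $\le |C_{ij}|/C_{ii}$ unless $C_{jj}\ge C_{ii}$. The clean fix is to instead bound $|R_{ij}|^2 = |C_{ij}|^2/(C_{ii}C_{jj}) \le (|C_{ij}|/C_{ii})(|C_{ij}|/C_{jj})$ and apply Cauchy--Schwarz over $j$, or — simpler — to note $R$ is itself symmetric positive definite with unit diagonal, apply Gershgorin to $R$ using the hypothesis rewritten for $R$; here the key inequality $|C_{ij}| \le \sqrt{C_{ii}C_{jj}}$ combined with one of the two dominance bounds at indices $i$ and $j$ does the job. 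This is a short lemma-level computation, but it is where the constant $\delta$ (as opposed to some slightly worse constant) actually has to be earned, so it deserves an explicit line or two rather than being waved through.
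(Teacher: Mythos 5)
Your overall architecture is the right one (and matches the paper's stated route): reduce to the eigenvalues of the correlation matrix $R = D^{-1}CD^{-1}$, trap them in $[1-\delta,\,1+\delta]$ via Gershgorin, and then the norm computations $\|D^{-1}L\|_2 \le \sqrt{1+\delta}$ and $\|(D^{-1}L)^{-1}\|_{S^4} \le N^{1/4}(1-\delta)^{-1/2}$ are exactly right. But the middle step has a genuine gap: it is \emph{not} true that the hypothesis $\sum_{j\neq i}|C_{ij}/C_{ii}|\le\delta$ implies $\sum_{j\neq i}|R_{ij}|\le\delta$, and neither of your proposed patches closes it. Concretely, take $N=3$ with
\begin{align*}
  C \;=\; \begin{pmatrix} 1 & \delta/2 & \delta/2 \\ \delta/2 & 1/2 & 0 \\ \delta/2 & 0 & 1/2 \end{pmatrix}.
\end{align*}
Every row satisfies $\sum_{j\neq i}|C_{ij}|/C_{ii} = \delta$ exactly, yet $R_{12}=R_{13}=\delta/\sqrt{2}$, so the first Gershgorin radius of $R$ is $\sqrt{2}\,\delta > \delta$. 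Your Cauchy--Schwarz fix gives $\sum_{j\neq i}|R_{ij}| \le \bigl(\sum_{j\neq i}|C_{ij}|/C_{ii}\bigr)^{1/2}\bigl(\sum_{j\neq i}|C_{ij}|/C_{jj}\bigr)^{1/2}$, and the second factor is a column-type sum that the row hypothesis only controls term-by-term, giving $O(\sqrt{N}\,\delta)$ in the worst case --- so applying Gershgorin to $R$ itself cannot earn the constant $\delta$. (The eigenvalues of $R$ in this example are $\{1-\delta,\,1,\,1+\delta\}$, so the conclusion you want is still true; it is your derivation of it that fails.)

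The fix is a one-line similarity trick: apply Gershgorin not to $R$ but to $D^{-2}C$, whose $i$th disc is centered at $(D^{-2}C)_{ii}=1$ with radius exactly $\sum_{j\neq i}|C_{ij}|/C_{ii}\le\delta$ --- the Gershgorin radii of $D^{-2}C$ \emph{are} the quantities the hypothesis bounds. Since $D^{-1}CD^{-1} = D\,(D^{-2}C)\,D^{-1}$, the two matrices share their spectrum, and the eigenvalues are real because $R$ is symmetric; hence every eigenvalue of $R$ lies in $[1-\delta,\,1+\delta]$. From there your final two inequalities and their product give the claimed bound verbatim. This is the step where, as you say, the constant has to be earned, and it is earned by choosing the right (non-symmetric) representative of the similarity class before invoking Gershgorin.
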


Second, $D$ is close to the ideal diagonal preconditioner, especially in the near-diagonal case.
\begin{proposition}
  \label{proposition:equilibriation-near-ideal}
  Let $D_{opt}$ be a preconditioner minimizing $\W(G^{-1}L)$ over all diagonal matrices $G$. Then,
  \begin{align*}
  \W(D^{-1}L)\leq \sqrt{N}\,\W(D_{opt}^{-1}L).
  \end{align*}
  Furthermore, if at most $K$ entries in each row of $LL^T$ are nonzero, then
  \begin{align*}
    \W(D^{-1}L)\leq \sqrt{K}\,\W(D_{opt}^{-1}L).
  \end{align*}
\end{proposition}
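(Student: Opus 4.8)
The plan is to rewrite $\W$ in terms of the scaled covariance and then run a van der Sluis--style equilibration argument. For any positive diagonal matrix $G$, set $M := G^{-1}L$ and $P := MM^T = G^{-1}CG^{-1}$ (using $C=LL^T$). The singular values of $M$ are the square roots of the eigenvalues of $P$, and those of $M^{-1}$ are their reciprocals, so \eqref{align:kappa} and \eqref{align:schatten-norm} give
\begin{align*}
  \W(G^{-1}L)^4 = \|M\|_2^{4}\,\|M^{-1}\|_{S^4}^{4} = \|P\|_2^{2}\,\Trace{P^{-2}}.
\end{align*}
I apply this with $G=D$, where $R := D^{-1}CD^{-1}$ is the correlation matrix (unit diagonal), and with an arbitrary positive diagonal $G$, where $P_G := G^{-1}CG^{-1}$. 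It then suffices to prove $\|R\|_2^{2}\,\Trace{R^{-2}} \le K^{2}\,\|P_G\|_2^{2}\,\Trace{P_G^{-2}}$ (with $N$ in place of $K$ in general), since taking fourth roots and then specializing $G=D_{opt}$ yields both claims.

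The factor $\|R\|_2$ is controlled directly: $R$ is PSD with unit diagonal, so $|R_{ij}|\le\sqrt{R_{ii}R_{jj}}=1$, and $R$ has the sparsity pattern of $C=LL^T$, so each row has at most $K$ nonzero entries of absolute sum at most $K$; a Gershgorin-type bound then gives $\|R\|_2\le K$ (and $\le N$ unconditionally). For the other factor, write $C = G P_G G$, so $R = F P_G F$ with $F := D^{-1}G$ diagonal; comparing diagonal entries, $1 = R_{ii} = F_{ii}^{2}\,(P_G)_{ii}$, hence $\|F^{-1}\|_2^{2} = \max_i (P_G)_{ii} \le \lambda_{\max}(P_G) = \|P_G\|_2$. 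Since $R^{-1} = F^{-1}P_G^{-1}F^{-1}$, the singular value inequality $\sigma_n(ST)\le\|S\|_2\,\sigma_n(T)$ gives $\lambda_n(F^{-1}P_G^{-1}F^{-1}) \le \|F^{-1}\|_2^{2}\,\lambda_n(P_G^{-1})$ with eigenvalues in decreasing order; squaring and summing, $\Trace{R^{-2}} \le \|F^{-1}\|_2^{4}\,\Trace{P_G^{-2}} \le \|P_G\|_2^{2}\,\Trace{P_G^{-2}}$.

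Combining the two estimates, $\W(D^{-1}L)^4 = \|R\|_2^{2}\,\Trace{R^{-2}} \le K^{2}\,\|P_G\|_2^{2}\,\Trace{P_G^{-2}} = K^{2}\,\W(G^{-1}L)^4$, and taking fourth roots with $G = D_{opt}$ (and $K=N$ for the first statement) finishes the proof. The only steps needing care are the identity $\W(G^{-1}L)^4 = \|P\|_2^{2}\Trace{P^{-2}}$ and the eigenvalue monotonicity $\lambda_n(F^{-1}P_G^{-1}F^{-1})\le\|F^{-1}\|_2^{2}\lambda_n(P_G^{-1})$ --- routine but easy to misstate; and it is worth noting that the sparse refinement enters \emph{only} through the bound $\|R\|_2\le K$, everything else being unchanged.
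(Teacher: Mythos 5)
Your proof is correct and follows essentially the same van der Sluis equilibration argument as the paper: your bound $\|R\|_2\le K$ is the square of the paper's Lemma \ref{lemma:l2-for-k-nonzeros} ($\|A\|_2\le\sqrt{K}$ for an L2-equilibrated $A=D^{-1}L$), your diagonal comparison $F_{ii}^{-2}=(P_G)_{ii}\le\|P_G\|_2$ plays the role of the paper's row-norm bound $\|G^{-1}A\|_2\ge\max_j|G_{jj}^{-1}|$, and your eigenvalue monotonicity $\lambda_n(F^{-1}P_G^{-1}F^{-1})\le\|F^{-1}\|_2^2\,\lambda_n(P_G^{-1})$ is the covariance-level counterpart of Lemma \ref{lemma:schatten-norm-monotonicity}. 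The only difference is that you phrase everything in terms of $P=MM^T$ and $\Trace{P^{-2}}$ rather than the factor $M$ and its Schatten norm, which changes nothing of substance.
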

The proof is found in section \ref{appendix:subsection:equilibration}.
A corollary is that preconditioning with $D$ can hurt by at most a factor of $\sqrt{N}$ (or $\sqrt{K}$).

Third, in practice, we use the diagonal matrix of sample standard deviations, $\Dhat$, rather than $D$. If we use $S$~\iid samples from $\calN(0, LL^T)$, then $S$ need only grow logarithmically with $N$.
\begin{proposition}
  \label{proposition:sample-stddevs-good-enough}
  Given $\eps, p\in(0, 1)$,
  \begin{align*}
    \W(\Dhat^{-1}L) &\leq \W(D^{-1}L)\, \sqrt{\frac{1 + \eps}{1 - \eps}},
  \end{align*}
  with probability $p$, as soon as the number of~\iid samples $S$ satisfies
  \begin{align*}
    S &\geq \frac{25}{\eps^2}\log \left( \frac{3N}{p} \right).
  \end{align*}
\end{proposition}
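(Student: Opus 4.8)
The plan is to reduce the statement to a one–dimensional concentration fact about sample variances, and then to push it through $\W$ using submultiplicativity of the norms in its definition.

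First I would factor the preconditioned matrix through the \emph{true} diagonal scaling: writing $\sigma_i:=\sqrt{C_{ii}}$ for the true coordinate standard deviations and $\hat\sigma_i$ for the sample ones, we have $\Dhat^{-1}L=(\Dhat^{-1}D)(D^{-1}L)$ and $(\Dhat^{-1}L)^{-1}=L^{-1}\Dhat=(L^{-1}D)(D^{-1}\Dhat)$. Using $\W(M)=\|M\|_2\|M^{-1}\|_{S^4}$ together with submultiplicativity of the spectral norm against itself and against the Schatten-$4$ norm ($\|AB\|_2\le\|A\|_2\|B\|_2$ and $\|AB\|_{S^4}\le\|A\|_2\|B\|_{S^4}$), this gives
\begin{align*}
  \W(\Dhat^{-1}L)
  &\le \|\Dhat^{-1}D\|_2\,\|D^{-1}L\|_2\;\|L^{-1}D\|_{S^4}\,\|D^{-1}\Dhat\|_2 \\
  &= \|\Dhat^{-1}D\|_2\,\|D^{-1}\Dhat\|_2\;\W(D^{-1}L).
\end{align*}
Since $\Dhat^{-1}D$ and $D^{-1}\Dhat$ are diagonal, $\|\Dhat^{-1}D\|_2=\max_i\sigma_i/\hat\sigma_i$ and $\|D^{-1}\Dhat\|_2=\max_i\hat\sigma_i/\sigma_i$. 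Hence it suffices to show that, with the claimed probability, every coordinate obeys $(1-\eps)\sigma_i^2\le\hat\sigma_i^2\le(1+\eps)\sigma_i^2$; on that event the product of the two maxima is at most $\sqrt{(1+\eps)/(1-\eps)}$, which is exactly the asserted bound.

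Next I would handle the probability by a union bound over the $N$ coordinates. For fixed $i$ the samples $X_i^1,\dots,X_i^S$ are i.i.d.\ $\calN(0,\sigma_i^2)$, so $S\hat\sigma_i^2/\sigma_i^2$ is a $\chi^2$ variable (with $S$ degrees of freedom if the known zero mean is used, $S-1$ if an empirical mean is subtracted). A standard sub-exponential / Laurent–Massart tail bound then yields $\rmP[\,|\hat\sigma_i^2/\sigma_i^2-1|>\eps\,]\le 2e^{-cS\eps^2}$ for $\eps\in(0,1)$; summing over $i$ and requiring $2Ne^{-cS\eps^2}$ to sit below the target failure level, then solving for $S$, reproduces a threshold of the form $S\ge (1/(c\eps^2))\log(\text{const}\cdot N/p)$, i.e.\ the stated $S\ge(25/\eps^2)\log(3N/p)$ once the constant is tracked.

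The two conceptual moves — the factorization through $D$ and the union bound — are routine. The real work is bookkeeping: pinning down the correct form of the $\chi^2$ tail inequality and the degrees-of-freedom convention for $\hat\sigma_i$ so that the final constant is genuinely $25$ and the logarithmic argument is genuinely $3N/p$, rather than something messier. I expect that to be the main (though purely mechanical) obstacle; I would state the two-sided bound $\rmP[\,|\chi^2_S-S|\ge S\eps\,]\le 2e^{-S\eps^2/8}$ for $\eps\in(0,1)$, apply it coordinatewise, and absorb any slack (from an empirical mean, or from replacing the exact inversion of $2\sqrt{St}+2t=S\eps$ by its $\eps^2$ proxy) into the constant.
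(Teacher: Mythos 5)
Your proposal is correct and follows essentially the same route as the paper: factor $\Dhat^{-1}L$ through the true diagonal $D$, reduce the ratio $\W(\Dhat^{-1}L)/\W(D^{-1}L)$ to the max/min of the coordinatewise variance ratios (your submultiplicativity step is the paper's Lemma~\ref{lemma:schatten-norm-monotonicity} in disguise), then union-bound over the $N$ coordinates with a Laurent--Massart $\chi^2$ tail. The only part you defer --- splitting $\eps$ between the $\chi^2$ term and the squared empirical mean, which is where the $3$ and the $25$ actually come from --- is exactly what the paper's Lemma~\ref{lemma:chi-square-bound} carries out.
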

\begin{proof}
  We have the distributional equality, $\Dhat \equalindistribution \sqrt{(I + \Delta)}D$, where $\Delta\in\RNN$ is diagonal,
  \begin{align*}
    \Delta_{nn} :&= \frac{1}{S}\sum_{s=1}^S Z_{s,n}^2 - 1 - \left( \frac{1}{S}\sum_{s=1}^S Z_{s,n}\right)^2,
  \end{align*}
  and $Z_{s,n}$ are~\iid~normal variates.  Then, using lemma \ref{lemma:schatten-norm-monotonicity},
  \begin{align*}
    \W(\Dhat^{-1} L)
    &= \|\Dhat^{-1}L\|_2 \|L^{-1}\Dhat\|_{S^4}
    \leq \sqrt{\frac{1 + \max_n\{\Delta_{nn}\}}{1 + \min_n\{\Delta_{nn}\}}}\, \W(D^{-1}L).
  \end{align*}
  The proof will be complete once we show that our condition on $S$ implies $\max_n|\Delta_{nn}| \leq \eps$ with probability less than $p$. This follows from lemma \ref{lemma:chi-square-bound} and the fact that $\rmP[\max_n |\Delta_{nn}| \geq \eps] \leq N\, \rmP[|\Delta_{11}| \geq \eps]$.
\end{proof}

\subsubsection{Full covariance preconditioning}
\label{subsubsection:full-cov-preconditioning}
By \emph{full covariance preconditioning}, we mean starting with the sample covariance, $\Chat$, factorizing as $\Chat = \Lhat\Lhat^T$, then preconditioning with $\Lhat$.
After stating two results on full covariance preconditioning, we discuss a scheme for implementing it. The results and the scheme were discussed in our previous work, \cite{Langmore2019-lj}. The scheme is discussed in more detail and implemented here for the first time (section \ref{subsection:nongaussian-preconditioning-numerics}).

Remarkably, if the samples are independent, the condition number does \emph{not} depend on the true covariance.
\begin{lemma}
  \label{lemma:preconditioned-is-inverse-wishart}
  Suppose $(X^1,\ldots,X^S)$ are~\iid~samples of $X\sim\calN(0, C)$, and we precondition sampling of $X$ with the $S$-sample factor $\Lhat$.  Then, the preconditioned $\W$ follows the law of $\W(B)$, for $BB^T\sim\InverseWishart{S}{N}$.
\end{lemma}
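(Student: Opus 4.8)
The plan is to track what the preconditioning transformation does to the covariance factor and recognize the resulting distribution. Start with the samples $X^1,\ldots,X^S$ arranged as columns of a matrix; since $X\sim\calN(0,C)$ with $C=LL^T$, we may write $X^s = L Z^s$ where $Z^s\sim\calN(0,I_N)$ are i.i.d. The sample covariance is then $\Chat = \frac1S\sum_s X^s (X^s)^T = L\left(\frac1S\sum_s Z^s(Z^s)^T\right)L^T = L M M^T L^T$, where $MM^T = \frac1S\sum_s Z^s(Z^s)^T$ is (a scaling of) a standard Wishart matrix. So a valid factor of $\Chat$ is $\Lhat = LM$, and by \eqref{align:linear-preconditioning-transformations} the preconditioned covariance factor is $\Lhat^{-1}L = M^{-1}L^{-1}L = M^{-1}$ (up to the orthogonal ambiguity in the choice of factor, which I address below). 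Hence $\W$ of the preconditioned problem equals $\W(M^{-1})$, and $M^{-1}(M^{-1})^T = (MM^T)^{-1}$, which — with $MM^T$ Wishart — is exactly an inverse-Wishart matrix with the stated parameters. The true covariance $C$ has canceled, which is the content of the ``remarkably'' remark.

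The key steps, in order: (1) write $X^s = LZ^s$ and reduce the sample covariance to $L(MM^T)L^T$ with $MM^T$ a standard Wishart; (2) observe $\Lhat = LM$ is a legitimate Cholesky-type factor of $\Chat$, so the preconditioned factor is $M^{-1}L^{-1}\cdot L = M^{-1}$; (3) note $\W(M^{-1}) = \|M^{-1}\|_2\|(M^{-1})^{-1}\|_{S^4} = \|M^{-1}\|_2\|M\|_{S^4}$ depends only on the singular values of $M^{-1}$, equivalently the eigenvalues of $(MM^T)^{-1}$; (4) identify $(MM^T)^{-1}$ as $\InverseWishart{S}{N}$ and conclude that $\W$ of the preconditioned problem is distributed as $\W(B)$ with $BB^T\sim\InverseWishart{S}{N}$.

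The main subtlety — and the step I would be most careful about — is the non-uniqueness of the factorization $\Chat = \Lhat\Lhat^T$: the algorithm fixes some particular factor (e.g. Cholesky), which differs from $LM$ by a right multiplication by an orthogonal matrix $Q$, i.e. $\Lhat = LMQ$. Then the preconditioned factor is $(LMQ)^{-1}L = Q^TM^{-1}$, and one must check that $\W$ is invariant under $L' \mapsto Q^T L'$ for orthogonal $Q$. This holds because $\|Q^TM^{-1}\|_2 = \|M^{-1}\|_2$ and $\|(Q^TM^{-1})^{-1}\|_{S^4} = \|MQ\|_{S^4} = \|M\|_{S^4}$, since both the spectral and Schatten norms are unitarily invariant (one-sided suffices here). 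So the choice of factor is immaterial and the law of $\W$ is well-defined. The remaining bookkeeping — matching the degrees-of-freedom / dimension parameters so that $(MM^T)^{-1}$ really is $\InverseWishart{S}{N}$ under whatever normalization convention the paper adopts — is routine once the Wishart identification in step (1) is pinned down.
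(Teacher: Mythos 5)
Your proposal is correct. The paper itself does not print a proof of this lemma (it defers to the earlier reference \cite{Langmore2019-lj}), but your argument is the natural one and matches what that result rests on: write $X^s = LZ^s$, observe $\Chat = (LM)(LM)^T$ with $MM^T$ a (scaled) Wishart so that the true factor $L$ cancels in $\Lhat^{-1}L$, and dispose of the non-uniqueness of the factorization via the unitary invariance of $\W$ (which, being a ratio of singular values, is also scale invariant, so the $1/S$ normalization is immaterial). The only nit is the identity $M^{-1}(M^{-1})^T = (MM^T)^{-1}$, which as written should be $(M^TM)^{-1}$; this is harmless since $M^TM$ and $MM^T$ are similar, and it disappears entirely if you take $M$ to be the symmetric square root of $\frac{1}{S}\sum_s Z^s(Z^s)^T$.
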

In the high dimensional limit, $\W$ for inverse Wishart matrices has a simple expression.
\begin{proposition}
  \label{proposition:inv-wishart-kappa}
  If $BB^T\sim\InverseWishart{N}{S}$, and $N\to\infty$ with $S / N\to \omega\in(1, \infty)$, then
  \begin{align*}
    \frac{\W(B)}{N^{1/4}} &\to \frac{(1 + \omega^{-1})^{1/4}}{1 - \omega^{-1/2}}
  \end{align*}
  almost surely.
\end{proposition}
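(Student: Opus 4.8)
The plan is to reduce $\W(B)$ to a functional of the eigenvalues of a Wishart matrix and then apply classical random matrix asymptotics. Since $\W(cB)=\W(B)$ for every $c>0$, the normalizing constant hidden in the definition of the inverse Wishart is irrelevant, and I may take $A:=(BB^T)^{-1}$ to be, in law, the Wishart matrix $A\equalindistribution\sum_{s=1}^S Z_sZ_s^T$ with $Z_s\sim\calN(0,I_N)$ i.i.d. (positive definite almost surely once $S\geq N$, which holds eventually since $\omega>1$). Writing $\mu_1\geq\cdots\geq\mu_N>0$ for the eigenvalues of $A$, the singular values of $B$ are $\mu_n^{-1/2}$, so
\begin{align*}
  \W(B)^4 = \|B\|_2^4\,\|B^{-1}\|_{S^4}^4 = \mu_N^{-2}\sum_{n=1}^N\mu_n^2 = \hat\mu_N^{-2}\sum_{n=1}^N\hat\mu_n^2,
\end{align*}
where $\hat\mu_n:=\mu_n/S$ are the eigenvalues of the sample covariance $S^{-1}A$ (the $S$-factors cancel, as they must by scale invariance). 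Hence $\W(B)^4/N = \hat\mu_N^{-2}\cdot\frac1N\sum_{n=1}^N\hat\mu_n^2 = \hat\mu_N^{-2}\cdot\frac1N\mathrm{tr}\big((S^{-1}A)^2\big)$.

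Next I would invoke two standard facts, with aspect ratio $N/S\to\omega^{-1}\in(0,1)$. First, the Bai--Yin theorem gives $\hat\mu_N\to(1-\omega^{-1/2})^2$ almost surely (and $\hat\mu_1\to(1+\omega^{-1/2})^2$); the hypothesis $\omega>1$ is precisely what keeps the limiting smallest eigenvalue positive, so $\hat\mu_N^{-2}$ converges. Second, the Marchenko--Pastur theorem gives that the empirical spectral distribution of $S^{-1}A$ converges weakly almost surely to the Marchenko--Pastur law with ratio $\omega^{-1}$, whose second moment equals $1+\omega^{-1}$. Combining the weak convergence with the Bai--Yin bound on $\hat\mu_1$ (which confines the entire spectrum to a fixed compact interval eventually, almost surely) upgrades weak convergence to convergence of the second moment, i.e.\ $\frac1N\mathrm{tr}\big((S^{-1}A)^2\big)\to 1+\omega^{-1}$ almost surely. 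Putting the pieces together, $\W(B)^4/N\to(1+\omega^{-1})/(1-\omega^{-1/2})^4$ almost surely, and taking fourth roots yields the stated limit.

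The only genuinely delicate step is the passage from weak convergence of the spectral distribution to convergence of its second moment, since $x\mapsto x^2$ is unbounded; this is resolved exactly by the Bai--Yin control of the extreme eigenvalues, which forces the edge/tail contribution to vanish. (Alternatively, one can bypass this by computing $\E\big[\frac1N\mathrm{tr}((S^{-1}A)^2)\big]$ explicitly, checking it tends to $1+\omega^{-1}$, and concluding almost sure convergence from a variance bound together with Borel--Cantelli, or from Bai--Silverstein's almost sure convergence of polynomial linear eigenvalue statistics.) Everything else --- the eigenvalue bookkeeping, the cancellation of the $S$-scaling, and the value $1+\omega^{-1}$ of the Marchenko--Pastur second moment --- is routine.
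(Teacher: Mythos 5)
Your proof is correct, and it follows the route the paper intends: the paper itself defers the proof to its cited prior work, but the explicit appearance of the Mar\u{c}enko--Pastur density in Figure \ref{fig:kappa-estimates} makes clear that the argument is exactly yours --- reduce $\W(B)^4/N$ to $\hat\mu_N^{-2}\cdot N^{-1}\mathrm{tr}((S^{-1}A)^2)$ for the underlying Wishart matrix, then apply Bai--Yin for the edge $(1-\omega^{-1/2})^2$ and the Mar\u{c}enko--Pastur second moment $1+\omega^{-1}$. You also correctly flag and resolve the one delicate point (upgrading weak convergence to convergence of the second moment via the almost-sure compact confinement of the spectrum), so there is nothing to add.
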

See figure \ref{fig:kappa-estimates} for a visualization of \ref{lemma:preconditioned-is-inverse-wishart} and proposition \ref{proposition:inv-wishart-kappa}. Due to the ``universality'' of random matrices, we expect these results to hold for linear transformations of a wide variety of~\iid~random variables \cite{Bai1999-dh,Bose2010-wh}.
\begin{figure}[h]
  \centering
  \includegraphics[width=0.66\textwidth]{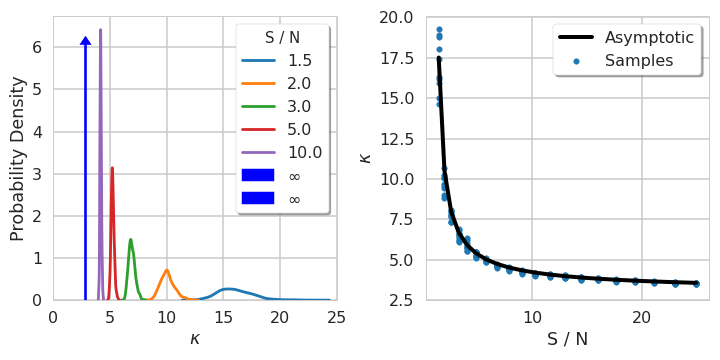}
  \includegraphics[width=0.33\textwidth, trim=390 400 0 0, clip=true]{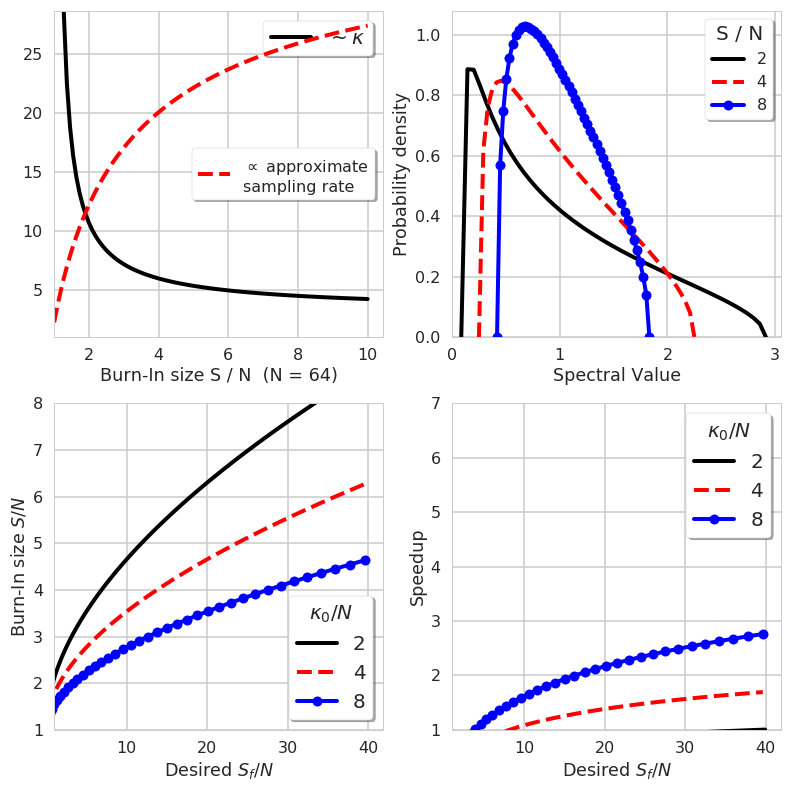}
  \caption{
  \label{fig:kappa-estimates}
  \textbf{Density and asymptotic} $\W(B)$, when $BB^T\sim\InverseWishart{N}{S}$ with $N=64$.
  \textbf{Left:} Density plots of sample values of $\W(B)$ for different $S/N$. $\W(B)\to N^{1/4}$ as $S/N\to\infty$.
  \textbf{Center:} Asymptotic estimate (from proposition \ref{proposition:inv-wishart-kappa}) vs.\ samples of $\W$. Once $S / N\approx 20$, $\W$ is close to the ideal value of $N^{1/4}\approx2.8$.
  \textbf{Right:} The Mar\u{c}enko-Pastur density, which is the limiting spectral density of $\Wishart{N}{S}$.
  }
\end{figure}

These results lead to a useful preconditioning scheme.
Let $\W_0$ be the condition number before preconditioning, and $\W_S$ be the condition number after preconditioning with $S$ \iid samples.  Assuming the sampling rate is proportional to $1 / \W$, the time to obtain $S_f$ ``final'' samples is proportional to  $S\W_0 + S_f \W_S$. On the other hand, without preconditioning, the time is proportional to $S_f\W_0$. This means the speedup from preconditioning is
\begin{align}
  \label{align:optimal-schedule-speedup}
  \frac{S_f\W_0}{S\W_0 + S_f\W_S}.
\end{align}
Estimating $\W_0$ using \eqref{align:kappa-approximation-via-step-size}, and using proposition \ref{proposition:inv-wishart-kappa} as an expression for $\W_S$, we can compute speedup for various $S$. If the maximal speedup (using $S^\ast$ samples) is $>1$, we proceed with drawing $S^\ast$ burn-in samples, precondition with  $\Lhat$, then draw our $S_f$ final samples.  If not we draw $S_f$ samples without preconditioning.

Mentioning some practicalities is in order. Burn-in samples obtained using standard HMC are far from independent. As a remedy, we use the No-U-Turn Sampler (NUTS) \citep{Hoffman2014-vt} to obtain the $S^\ast$ preconditioning burn-in samples, and stop sampling when the mean (across dimensions) effective sample size is $S^\ast$. In our experience, obtaining NUTS samples takes around 3x longer than standard HMC samples. This happens since NUTS sampling involves doubling the trajectory length and resampling within these long trajectories.  Moreover, the additional preconditioning stage requires another step size adaptation stage.  We therefore replace $\W_0$ with $4 \W_0$ in the denominator of \eqref{align:optimal-schedule-speedup}. This leads to algorithm \ref{algorithm:kappa-sampling}. See also plots of samples in different stages in figure \ref{fig:traces}.

\begin{algorithm}[h]
  \label{algorithm:kappa-sampling}
  \SetAlgoLined
  Initialize 20 chains by sampling from the prior\;
  Start $h$ small enough so $\rmP[\accept]\approx1.0$, then adapt $h$ until $\rmP[\accept]\approx0.9$. Use number of leapfrog steps $\ell=5$.  When done, set $\ell=(1/h) (\pi/2)$\;
  Draw stage 1 samples. Use them to compute the largest scale $\lambda_1$, then set $\ell = (\lambda_1/h)(\pi/2)$\;
  Draw stage 2 samples. Use them to re-compute $\lambda_1$ and $\rmP[\accept]$. Compute $\W_0$ using \eqref{align:kappa-approximation-via-step-size} and maximal speedup using \eqref{align:optimal-schedule-speedup}\;
  \If{maximal speedup $ > 1$}
  {
  \While{$N^{-1}\sum_{n=1}^N\{ESS_n(\mbox{stage 3 samples})\} < S^\ast$}{
    Draw more stage 3 samples using NUTS\;
  }
    Precondition using the stage 3 sample covariance\;
    Adapt step size until $\rmP[\accept]\approx0.9$, and set $\ell=(1/h)(\pi/2)$\;
  }
  \While{$\miness(\mbox{final stage samples}) < S_f$}{
    Draw more \emph{final} stage samples\;
  }
  \KwResult{$S_f$ ``final'' samples}
  \caption{Sampling stages for unimodal problems. Abandon and restart using REMC if $\Rhat$ fails to reduce fast enough.}
\end{algorithm}
Each step size adaptation in algorithm \ref{algorithm:kappa-sampling} is done via an iterative scheme \cite{Andrieu2008-xv}, invoked after preemptively adjusting step size using \eqref{align:kappa-approximation-via-step-size}. Step size adjustment comprises around 30\% of runtime. This could often be shorter, but this longer adaptation makes the algorithm more robust to stuck chains.

\subsection{Application to a weakly non-Gaussian problem}
\label{subsection:nongaussian-preconditioning-numerics}
Here we compare preconditioning schemes as applied to the shell model from section \ref{subsection:model-problem}.
This problem is non-Gaussian. In particular, the noise level depends on the signal, the temperature and amplitude are constrained to be positive via a Softplus, the coordinate system is shifted, and emissivity is a nonlinear function. This is still ``weakly'' non-Gaussian, since observed skew and kurtosis levels of transformed samples $Z$ were close to that of a Normal.

The schemes compared are referred to as ``full'', ``diag'', and ``none''.  ``Full'' uses algorithm \ref{algorithm:kappa-sampling}. ``Diag'' uses algorithm \ref{algorithm:kappa-sampling} but skips the NUTS sampling and uses diagonal rather than sample covariance preconditioning. ``None'' does not precondition.  Code was run on Tesla P100 GPUs. 

20 plasmas to reconstruct were drawn from the prior. The reconstruction model used a variety of noise levels from $\sigma=1.25$ to 15.  Each reconstruction was run until $\miness=S_f$, for $S_f\in\left\{ 400, 1600, 6400 \right\}$.  A total of 282 reconstruction configurations were attempted for each of the three schemes.  Eleven configurations were thrown out, since at least one model failed to reduce $\Rhat$ fast enough. This is usually the result of being stuck in a local mode due to poor initialization. If this happens in production, our algorithm re-starts with REMC.
The traces in figure \ref{fig:traces} help visualize different stages of algorithm \ref{algorithm:kappa-sampling}.
Figure \ref{fig:kappa-estimation-in-practice} shows $\W$ correlates well with sampling efficiency, and can be predicted from proposition \ref{proposition:inv-wishart-kappa}, even in this non-Gaussian problem.
Figure \ref{fig:runtime-correlation-study} shows that the full preconditioner significantly speeds up sampling, at the cost of a more expensive burn-in. Diagonal preconditioning helps only a little.

\begin{figure}[h]
  \begin{center}
    \includegraphics[width=0.49\textwidth]{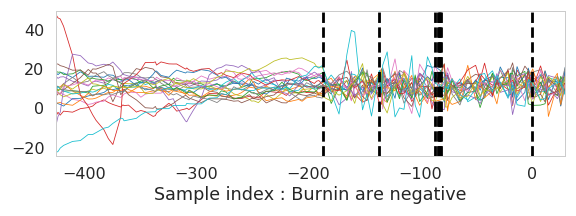}
    \includegraphics[width=0.49\textwidth]{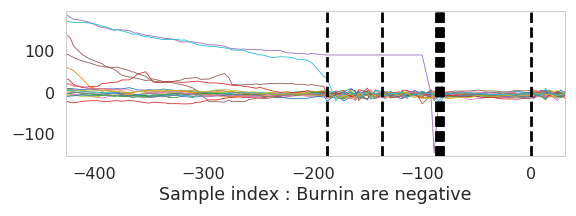}
  \end{center}
  \caption{
  {\bf Traces of algorithm \ref{algorithm:kappa-sampling}:} Plots of coordinate samples ($Z^s_n$) are \emph{the} most important diagnostic tool.
  {\bf Left:} Stages of algorithm \ref{algorithm:kappa-sampling} are divided by dotted lines. Leftmost is the initial step size adaptation phase. Using a large number of samples here allows chains to reach the typical set. The low $\ell$ value and low initial $h$ means sampling proceeds slowly, but is likely to be stable. Second from the left are the ``stage 1'' samples. These have $\ell$ large enough to get a reasonable estimate of $\lambda_1$, and also allow chains more time to reach the typical set. Next are the ``stage 2'' samples, used to compute $\W_0$ and the number of preconditioning samples needed, $S^\ast$.  Next are 4 very closely spaced dashed lines, within which NUTS sampling was used to obtain $S^\ast$ effective samples. These are used to form the sample covariance factor $\Lhat$ used for preconditioning. Second stage from the right is step size adaptation done after preconditioning. The final stage, starting at 0, includes the first 25 ``final'' samples.
  {\bf Right:} Same stages, in a case where preconditioning hurt. The reason here is that some chains did not reach the typical set before preconditioning samples were taken. This led to a bad preconditioner.
  }
  \label{fig:traces}
\end{figure}
\begin{figure}[h]
  \begin{center}
    \includegraphics[width=\textwidth]{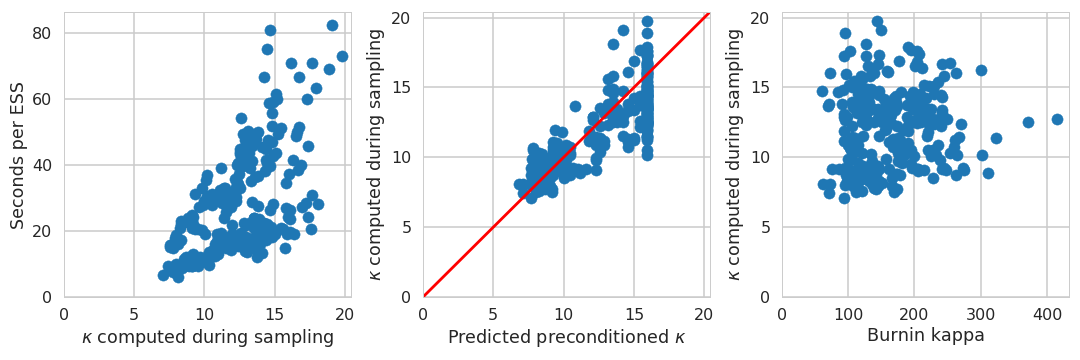}
  \end{center}
  \caption{
  Using \eqref{align:kappa-approximation-via-step-size}, we compute $\W$ at various sampling stages to show our formulas apply, even in non-Gaussian problems.  Experiments when the burn-in $\W$ was huge (upper 2\%) are not shown, as these distort the plot. Preconditioning often made the situation worse for these.
  {\bf Left:} The relationship between $\W$ and the seconds required for effective samples is somewhat close to linear.  This validates \eqref{align:optimal-schedule-speedup} as a measure of speedup from preconditioning.
  {\bf Center:} The predicted post-preconditioning value of $\W$ matches nicely with the actual value obtained by preconditioning, validating proposition \ref{proposition:inv-wishart-kappa}.
  {\bf Right:} The post-preconditioning $\W_S$, is plotted against the burn-in value, $\W_0$. This shows a significant reduction in $\W$ due to algorithm \ref{algorithm:kappa-sampling}.
  }
  \label{fig:kappa-estimation-in-practice}
\end{figure}

\begin{figure}[h]
  \begin{center}
    \includegraphics[width=0.32\textwidth]{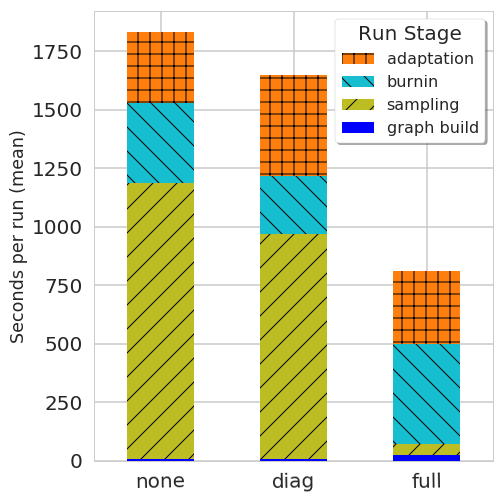}
    \includegraphics[width=0.32\textwidth]{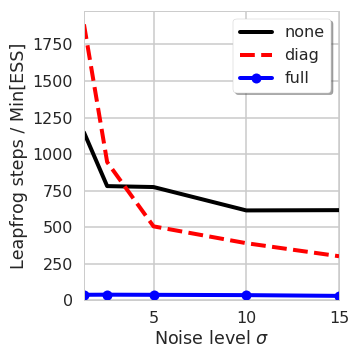}
    \includegraphics[width=0.32\textwidth]{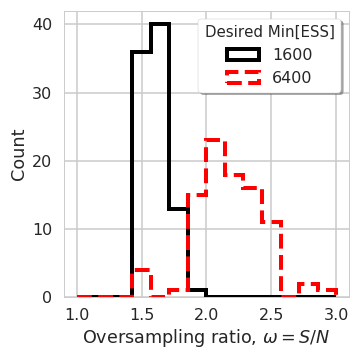}
  \end{center}
  \caption{
  {\bf Preconditioner comparison:} From the study outlined in section \ref{subsection:nongaussian-preconditioning-numerics}.
  {\bf Left:} Average runtime breakdown ($S_f=1600$) shows that while the burn-in time is longer for full preconditioning, the sampling time is significantly shorter.
  {\bf Center:} Full covariance preconditioning results in sampling efficiency around 36, which is around 30x better than not preconditioning.
  {\bf Right:} Histogram of oversampling ratio $\omega=S/N$ selected by \eqref{align:optimal-schedule-speedup} for two values of desired $\miness$. When the desired $\miness$ is larger, a larger $\omega$, is selected.
  }
  \label{fig:runtime-correlation-study}
\end{figure}

\section{Tempering to sample with multi-modality}
\label{section:sampling-with-multi-modality}
Best practices for sampling from multi-modal distributions are not as easy to come by as for their unimodal counterparts. For example, linear preconditioning usually does not help. A popular family of techniques involves using a number of modifications of the target, each \emph{tempered} by temperature $T$. The terminology and history is rooted in statistical mechanics \cite{Swendsen1986-wl}. These techniques make use of the fact that if $\pi$ is a probability density, and temperature $T>1$, the density proportional to $\pi^{1/T}$ will have lower peaks and higher troughs. Hence, it will be better able to jump between modes.

\subsection{Replica Exchange Monte Carlo (REMC)}
\label{subsection:remc-intro}
This section reviews REMC (also known as \emph{parallel tempering}). Related techniques, such as \emph{annealed importance sampling}, also deserve consideration \cite{Neal2001-mg}. 

Given posterior $\posterior \propto \prior\likelihood$, and sequence of temperatures $1=T_1<T_2<\cdots<T_R\leq\infty$, we form the \emph{replica} densities $\pi_r$ in one of two ways.
\begin{align}
  \label{align:tempered-densities}
  \pi_r(x) &\propto \left\{ 
  \begin{matrix}
    \prior^{1/T_r} \likelihood^{1/T_r},&\quad\mbox{posterior tempering (requires $T_R<\infty$)},\\
    \prior\likelihood^{1/T_r},&\quad\mbox{likelihood tempering}.
  \end{matrix}
  \right.
\end{align}

To gain intuition, consider the unimodal example where the prior $p(x)\sim\calN(0, I)$ and the likelihood $p(y\g x)\sim\calN(\mu, \Gamma)$. The posterior covariance after tempering with $T$ will be
\begin{align}
  \label{align:tempered-posterior-covariance-gaussian}
  \Gamma_{post}(T) :&= \left\{ 
  \begin{matrix}
    T \left[ I + \Gamma^{-1} \right]^{-1}, \quad\mbox{posterior tempering},\\
    \left[ I + T^{-1}\Gamma^{-1} \right]^{-1}, \quad\mbox{likelihood tempering}.
  \end{matrix}
  \right.
\end{align}
Posterior tempering increases posterior variance without changing the shape or condition number. Likelihood tempering distorts the posterior covariance to make it look like the prior.
See figure \ref{fig:tempering-scatterplots}.
\begin{figure}[h]
  \begin{center}
    \includegraphics[width=0.9\textwidth]{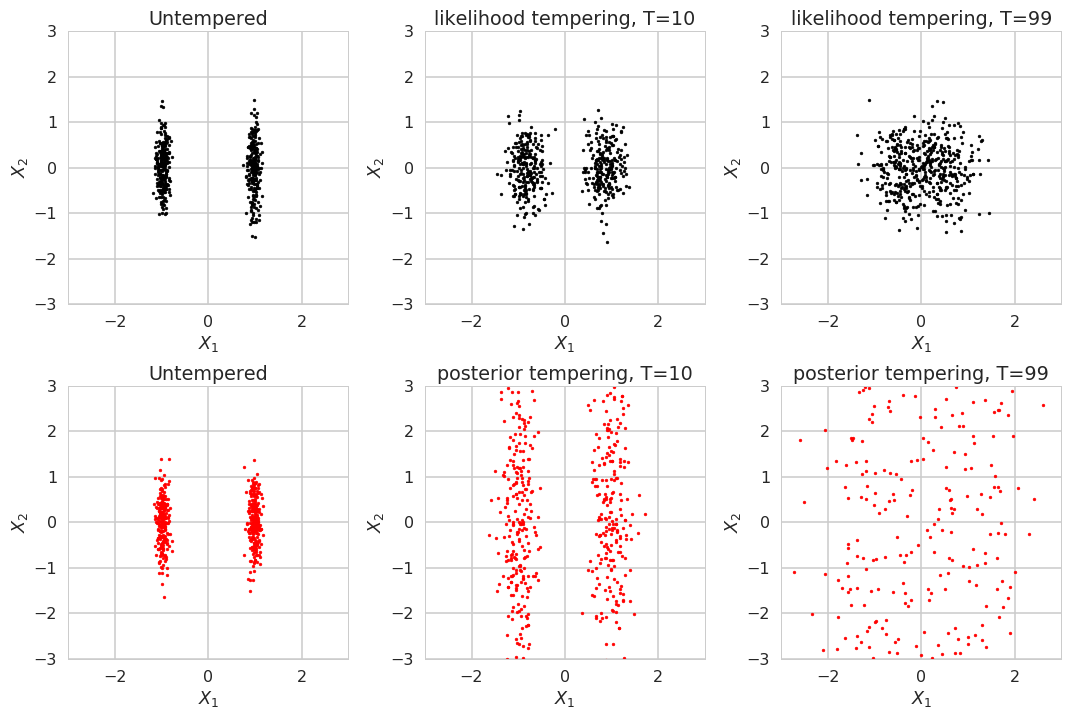}
  \end{center}
  \caption{
  {\bf Likelihood and posterior tempering:} Sampling from a tempered bi-modal normal (an $M=1$ version of \eqref{align:multi-model-toy-model}).
  {\bf Top:} Likelihood tempering means the hottest ($T=99$) replica samples come from a (nearly) isotropic Gaussian.
  {\bf Bottom:} Posterior tempering means the $T=10$ replica samples come from a (nearly) bi-modal normal, with each mode being (nearly) as poorly conditioned as the posterior ($T=1$) modes.
  }
  \label{fig:tempering-scatterplots}
\end{figure}

Together, these form the joint density $\pi(x_1,\ldots,x_R) := \pi_1(x_1)\cdots \pi_R(x_R)$. Samples from $(X_1^{k},\ldots,X_R^{k}) \sim\pi$ are generated, but only the target samples $X_1^{k}\sim\pi_1 = p(\cdot\g y)$ are kept.
REMC repeats two alternating steps. In the \emph{exploration} step, each replica progresses independently, $X^{k-1}_r \to X^k_r$. HMC, or another sampling method, can be used here. In the \emph{communication} step, a number of \emph{swaps} are proposed between adjacent replicas. For example, the $(1,2)$ swap proposes that replicas 1 and 2 exchange position; $X^{k+1}_1 = X^k_2$ and $X^{k+1}_2 = X^k_1$.
The set of swaps proposed at each turn are either the set of even swaps $\left\{ (1,2), (3, 4),\ldots \right\}$, or odd swaps $\left\{ (2, 3), (4, 5),\ldots \right\}$. In both cases, each swap is independently accepted or rejected according to the standard Metropolis-Hastings criteria. For example,
\begin{align}
  \label{align:swap-prob-12}
  \begin{split}
    \rmP[\swap_{(1,2)}\g x_1, x_2]
    &= \min\left\{ 1, \alpha_{(1,2)(x_1, x_2)} \right\},\\
    \alpha_{(1,2)} :&= \frac{\pi(x_2,x_1,x_3,\ldots)}{\pi(x_1,x_2,x_3,\ldots)}
    = \frac{\pi_1(x_2)\pi_2(x_1)}{\pi_1(x_1)\pi_2(x_2)}.
  \end{split}
\end{align}

The computational cost of swapping is negligible compared with leapfrog integration.  We therefore propose swaps between every exploration step. In most of the literature, swaps are proposed using a \emph{stochastic even-odd} (SEO) scheme, whereby a coin flip chooses between even and odd swaps. More recently, deterministically alternating between even and odd swaps was proposed \cite{Okabe2001-cy}. This \emph{deterministic even odd} scheme (DEO) scheme has superior scaling characteristics \cite{Syed2022-ag}. We use DEO in all experiments.

\subsection{Swapping, dimension laws, and under-constrained problems}
\label{subsection:swapping}
Effective REMC requires the mean swap probabilities,
\begin{align*}
  \swapprob = \Exp{\rmP[\swap_{(r, r+1)}\g X_r, X_{r+1}]},
\end{align*}
be large enough to allow information from the hottest replica (the replica using the highest temperature) to make its way to the target. This section shows that, unlike conductance, swap probability is strongly related to dimension.

\subsubsection{Existing fundamental results}
\label{subsubsection:existing-fundamental-results}
In higher dimensions, samples concentrate in a thin neighborhood of the \emph{typical set} \cite{Betancourt2017-od}.  For example, if $\pi$ is log-concave, the typical set is $\{x\st \log\pi(x) = \Exp{\log\pi(X)}\}$, and the neighborhood grows (relatively) thinner as dimension increases \cite{Bobkov2011-mj}.
Since swapping of replicas $r$ and $r+1$ must lead to valid samples from their respective densities, they must swap about as often as these neighborhoods overlap.  This overlap is made explicit by
\begin{proposition}
  \label{proposition:swap-probs-as-typical-set-overlap}
  Let $X_r$ be a sample from replica $\pi_r$ defined by \eqref{align:tempered-densities}.
  The mean swap probability can be written in terms of the untempered posterior/likelihood:
  \begin{align*}
    \rmP[\swap_{(r, r+1)}]
    &= \left\{
    \begin{matrix}
      2\,\rmP[p(X_r \g y) < p(X_{r+1}\g y)],&\quad\mbox{posterior tempering},\\
      2\,\rmP[p(y \g X_r) < p(y\g X_{r+1})],&\quad\mbox{likelihood tempering}.
    \end{matrix}
    \right.
  \end{align*}
\end{proposition}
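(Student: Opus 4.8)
The plan is to reduce both cases to one algebraic identity for the Metropolis ratio and then exploit a detailed-balance-type symmetry. Set $\beta_r := 1/T_r$, so that $\beta_1 = 1 > \beta_2 > \cdots \geq 0$, and write $\Delta := \beta_r - \beta_{r+1} > 0$. For likelihood tempering the replicas are $\pi_r \propto \prior\,p(y\g x)^{\beta_r}$, so in \eqref{align:swap-prob-12} the prior factors and the overlapping powers of the likelihood cancel, leaving
\begin{align*}
  \alpha_{(r,r+1)}(x_r, x_{r+1})
  = \frac{\pi_r(x_{r+1})\,\pi_{r+1}(x_r)}{\pi_r(x_r)\,\pi_{r+1}(x_{r+1})}
  = \left( \frac{p(y\g x_{r+1})}{p(y\g x_r)} \right)^{\Delta}.
\end{align*}
For posterior tempering, $\pi_r \propto (\prior\,p(y\g x))^{\beta_r} \propto p(x\g y)^{\beta_r}$, and the identical computation gives $\alpha_{(r,r+1)} = \bigl(p(x_{r+1}\g y)/p(x_r\g y)\bigr)^{\Delta}$. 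In both cases $\alpha_{(r,r+1)}(x,x') \geq 1$ exactly when the untempered ``score'' at $x'$ (the likelihood, resp.\ the posterior) is at least that at $x$, and $\alpha_{(r,r+1)}(x',x) = \alpha_{(r,r+1)}(x,x')^{-1}$.

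Next I would record the pointwise identity $\pi_r(x)\,\pi_{r+1}(x')\,\alpha_{(r,r+1)}(x,x') = \pi_r(x')\,\pi_{r+1}(x)$, which is immediate from the display above. Taking $X_r\sim\pi_r$ and $X_{r+1}\sim\pi_{r+1}$ independent, I split
\begin{align*}
  \rmP[\swap_{(r,r+1)}] = \Exp{\min\{1,\alpha_{(r,r+1)}(X_r,X_{r+1})\}}
\end{align*}
over the region $\{\alpha \geq 1\}$, where the integrand is $\pi_r(x)\pi_{r+1}(x')$, and the region $\{\alpha < 1\}$, where it is $\pi_r(x)\pi_{r+1}(x')\,\alpha(x,x')$. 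On the second region the identity rewrites the integrand as $\pi_r(x')\pi_{r+1}(x)$; renaming the dummy variables $x\leftrightarrow x'$ and using $\alpha(x',x) < 1 \iff \alpha(x,x') > 1$ turns that contribution into $\rmP[\alpha_{(r,r+1)}(X_r,X_{r+1}) > 1]$. Hence $\rmP[\swap_{(r,r+1)}] = \rmP[\alpha \geq 1] + \rmP[\alpha > 1]$.

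It remains to translate $\{\alpha\geq1\}$ and $\{\alpha>1\}$ back into the stated events and to merge the two probabilities. By the first step these are $\{p(y\g X_r) \leq p(y\g X_{r+1})\}$ and $\{p(y\g X_r) < p(y\g X_{r+1})\}$ for likelihood tempering (and the same with $p(\cdot\g y)$ for posterior tempering). Under mild regularity --- it suffices that the level sets $\{x : p(y\g x) = c\}$ have Lebesgue measure zero, so the law of $p(y\g X_r)$ is non-atomic --- the ``tie'' event has probability zero, the two terms coincide, and we obtain $\rmP[\swap_{(r,r+1)}] = 2\,\rmP[p(y\g X_r) < p(y\g X_{r+1})]$, and likewise for posterior tempering. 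This non-atomicity observation is the only subtle point; the rest is bookkeeping with the change of variables. If one prefers not to assume it, the factor $2$ is simply replaced by the sum of the strict and non-strict inequality probabilities.
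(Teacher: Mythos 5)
Your proof is correct and follows essentially the same route as the paper's: compute the Metropolis ratio in terms of the untempered potential, split the integral over the region where the ratio exceeds one and its complement, and swap dummy variables to identify the two pieces. Your explicit handling of the tie set $\{\alpha = 1\}$ via non-atomicity is a small point of extra care that the paper's proof glosses over, but it does not change the argument.
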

\begin{proof}
  The proof has likely been shown many times before. See e.g.~\cite{Kofke2002-my}, for the case of posterior tempering.
  Consider likelihood tempering, and write $\pi_r(x)\propto \exp\left\{ -V(x)/T_r - V_0(x) \right\}$.
  The Metropolis criteria \eqref{align:swap-prob-12} gives
  \begin{align}
    \label{align:mean-swap-prob-integral}
    \rmP[\swap_{(r, r+1)}]
    &= \int \min\left\{ \pi_r(x_r)\pi_{r+1}(x_{r+1}),\, \pi_r(x_{r+1})\pi_{r+1}(x_r) \right\}\dx_r\dx_{r+1}.
  \end{align}
  Since $T_r < T_{r+1}$,
  \begin{align*}
    \pi_r(x_r)\pi_{r+1}(x_{r+1}) < \pi_r(x_{r+1})\pi_{r+1}(x_r)
    & \Longleftrightarrow\frac{V(x_r)}{T_r} + \frac{V(x_{r+1})}{T_{r+1}} > \frac{V(x_r)}{T_{r+1}} + \frac{V(x_{r+1})}{T_r} \\
    &\Longleftrightarrow V(x_r) > V(x_{r+1}).
  \end{align*}
  This leads us to split the integral \eqref{align:mean-swap-prob-integral} up over regions $\left\{ V(x_r) > V(x_{r+1}) \right\}$ and $\left\{ V(x_r) < V(x_{r+1}) \right\}$. A switch of the dummy variables $x_r$, $x_{r+1}$ in the second shows that both integrals are equal to $\rmP[V(X_r) > V(X_{r+1})]$, which gives to the desired result. The case of posterior tempering is similar.
\end{proof}

For REMC to work well, information must propagate from the hottest replica (replica using the highest temperature) to the target.
To study this, one can keep track of the \emph{index process} of temperatures. For example, chain $k$ may start by sampling from $\pi_R$, then swap and sample from $\pi_{R-1}$, then $\pi_{R-2}$, $\pi_{R-1}$ and so on. The corresponding indices would be $(R, R-1, R-2, R-1,\ldots)$. A \emph{round trip} occurs when a chain starts at index $k$, reaches $R$, then $1$, then back to $k$.  The average number of round trips, starting from all replicas, after $S$ swap attempts, is the \emph{round trip rate}.
To derive round trip rates for likelihood tempering when $T_R=\infty$, \cite{Syed2022-ag} makes three assumptions: First, \emph{stationarity}, $X_r\sim\pi_r$, which is reasonable after burn-in. Second, \emph{efficient local exploration} (ELE). ELE means that, if $X\sim\pi_r$, and $X'$ is the result of local exploration (e.g.~HMC integration) starting from $X$, then the potential energy is independent. In the case of posterior tempering, this means $\log[p(X\g y)]$ and $\log[p(X'\g y)]$ are independent, and for likelihood tempering, $\log[p(y\g X)]$ and $\log[p(y\g X')]$ are independent. Third, they assume integrability of the cubed log likelihood. This leads to round trip rates for the SEO and DEO swapping schemes:
\begin{align}
  \label{align:round-trip-rates}
  \tau_{SEO} &= \frac{1}{2 R + 2 \gamma},
  \quad
  \tau_{DEO} = \frac{1}{2 + 2 \gamma},
\end{align}
where $\gamma$ is the \emph{schedule inefficiency}
\begin{align*}
  \gamma :&= \sum_{r=1}^{R-1} \frac{1 - \rmP[\swap_{(r, r+1)}]}{\rmP[\swap_{(r, r+1)}]}.
\end{align*}
This justifies using DEO rather than SEO.
Importantly for us, $2 \tau$ is the fraction of samples, starting from $\pi_R$, that make their way down to the target $\pi_1$.

Note that ELE will be violated if chains are stuck in different modes, and the modes do not have identical energy surfaces. In other words, we expect ELE to be violated in most multi-modal problems. Nonetheless, \cite{Syed2022-ag} finds that the results of this section roughly held in a variety of problems despite ELE being violated.

As $\max_r|T_r^{-1}-T_{r+1}^{-1}|\to0$, the swap probabilities are governed by the increasing function, $\Lambda(T)$, which satisfies
\begin{theorem}[\cite{Syed2022-ag} Theorem 2]
  \label{theorem:syed-theorem-2}
  For annealing schedule $1=T_1<T_2<\cdots<T_R\leq\infty$,
  \begin{align*}
    1 - \swapprob &= \Lambda(T_{r+1}) - \Lambda(T_r) + O(\max_r |T_r^{-1} - T_{r+1}^{-1}|^3).
  \end{align*}
\end{theorem}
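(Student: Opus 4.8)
The plan is to collapse the mean swap probability to a one-dimensional quantity, obtain an exact closed form by changing measure to the mid-temperature, and then Taylor expand; the point is that an $O(\eps^2)$ term cancels, leaving the advertised $O(\eps^3)$. First I would use stationarity, $X_r\sim\pi_r$, together with the Metropolis swap rule \eqref{align:swap-prob-12}: exactly as in the proof of Proposition \ref{proposition:swap-probs-as-typical-set-overlap}, $\rmP[\swap_{(r,r+1)}]$ depends on the two replicas only through the law of $V:=-\loglikelihood$ (for likelihood tempering; of $U:=-\logposterior$ for posterior tempering). Writing $\beta_r:=1/T_r$, these push-forward laws $\mu_{\beta}$ form a one-parameter exponential family in $\beta$ (an exponential tilt by $\beta$ of a fixed carrier measure), so $\mathrm{d}\mu_{\beta'}/\mathrm{d}\mu_{\beta}\propto e^{-(\beta'-\beta)v}$. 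Anchoring at the midpoint $\bar\beta:=\tfrac12(\beta_r+\beta_{r+1})$ and setting $\eps:=|\beta_r-\beta_{r+1}|=|T_r^{-1}-T_{r+1}^{-1}|$, this change of measure, combined with the elementary identity $\min(1,e^{\eps t})e^{-(\eps/2)t}=e^{-(\eps/2)|t|}$ (valid for $\eps>0$), reduces the double integral to the exact formula
\begin{align*}
  \rmP[\swap_{(r,r+1)}]
  &= \frac{\E_{\pi_{\bar\beta}\otimes\pi_{\bar\beta}}\bigl[e^{-(\eps/2)|V(X)-V(X')|}\bigr]}{\E_{\pi_{\bar\beta}}\bigl[e^{-(\eps/2)V}\bigr]\,\E_{\pi_{\bar\beta}}\bigl[e^{(\eps/2)V}\bigr]}.
\end{align*}

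Next I would expand this in $\eps$ about $0$ with $\bar\beta$ held fixed, to second order with an $O(\eps^3)$ remainder. The denominator is a product $\E[e^{-aV}]\,\E[e^{aV}]$ with $a=\eps/2$, so its logarithm is $\psi(-a)+\psi(a)$ for the cumulant generating function $\psi$ of $V$ under $\pi_{\bar\beta}$; this equals $a^2\,\mathrm{Var}_{\pi_{\bar\beta}}(V)+O(a^4)$, so the denominator is $1+\tfrac14\mathrm{Var}_{\pi_{\bar\beta}}(V)\,\eps^2+O(\eps^4)$, with no odd-order terms. In the numerator the $O(\eps)$ coefficient is $-\tfrac12\E_{\pi_{\bar\beta}\otimes\pi_{\bar\beta}}|V-V'|$, and the $O(\eps^2)$ coefficient is $\tfrac18\E(V-V')^2=\tfrac14\mathrm{Var}_{\pi_{\bar\beta}}(V)$ since $V,V'$ are \iid under $\pi_{\bar\beta}\otimes\pi_{\bar\beta}$. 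Hence the $O(\eps^2)$ contributions cancel in the ratio, and with $\lambda(\beta):=\tfrac12\E_{\pi_\beta\otimes\pi_\beta}|V(X)-V(X')|$,
\begin{align*}
  1-\rmP[\swap_{(r,r+1)}] &= \eps\,\lambda(\bar\beta)+O(\eps^3).
\end{align*}

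Because $\bar\beta$ is the midpoint of $[\beta_{r+1},\beta_r]$, midpoint quadrature gives $\eps\,\lambda(\bar\beta)=\int_{\beta_{r+1}}^{\beta_r}\lambda(s)\ds+O(\eps^3)$. Setting $\Lambda(T):=\int_{1/T}^{1}\lambda(s)\ds$ — which is increasing in $T$ since $\lambda\ge0$, and finite even at $T=\infty$ as $\int_0^1\lambda<\infty$ — one has $\Lambda(T_{r+1})-\Lambda(T_r)=\int_{\beta_{r+1}}^{\beta_r}\lambda$, and since $\eps\le\max_r|T_r^{-1}-T_{r+1}^{-1}|$ this is exactly the stated identity. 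The posterior-tempering case is identical with $U$ in place of $V$ throughout (and there $T_R<\infty$, so the $T=\infty$ endpoint never arises). Note that, unlike the round-trip-rate formulas \eqref{align:round-trip-rates}, this argument uses only stationarity, not ELE.

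The main obstacle is making the $O(\eps^3)$ rigorous and uniform in $r$. The Taylor remainders produce quantities like $\E_{\pi_{\bar\beta}}\bigl[V^3 e^{\theta(\eps/2)V}\bigr]$ and $\E_{\pi_{\bar\beta}\otimes\pi_{\bar\beta}}|V-V'|^3$, which must be bounded by a single constant as $\bar\beta$ ranges over all schedule midpoints (essentially $\bar\beta\in(0,1)$). This requires the integrability hypothesis $\sup_{\beta\in[0,1]}\E_{\pi_\beta}|V|^3<\infty$ on the cubed log-likelihood, together with the observation that the exponential moments appearing in the exact formula are just the ratios $Z(\beta_r)/Z(\bar\beta)$ and $Z(\beta_{r+1})/Z(\bar\beta)$ with $Z(\beta)=\int\likelihood^{\beta}\prior\dx\in(0,\infty)$ for $\beta\in[0,1]$, so they never blow up — even at the hot end. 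One also needs to differentiate twice under the integral sign to know $\lambda\in C^2$, which is the same kind of domination estimate. The conceptual heart is the $O(\eps^2)$ cancellation; the uniform remainder bound is the part that needs care.
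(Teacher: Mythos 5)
The paper does not prove this theorem: it is imported verbatim as Theorem 2 of \cite{Syed2022-ag}, so there is no in-paper argument to compare against. Your proof is a correct reconstruction of the argument in that reference, and it follows essentially the same route: the exponential-tilt change of measure to the midpoint inverse temperature $\bar\beta$, the identity $\min(1,e^{\eps t})e^{-(\eps/2)t}=e^{-(\eps/2)|t|}$ yielding the exact ratio formula for $\rmP[\swap_{(r,r+1)}]$, the cancellation of the $\tfrac{1}{4}\Var{V}\,\eps^2$ terms between numerator and denominator, and midpoint quadrature against $\lambda(\beta)=\tfrac12\E_{\pi_\beta\otimes\pi_\beta}|V(X)-V(X')|$ to convert $\eps\,\lambda(\bar\beta)$ into $\Lambda(T_{r+1})-\Lambda(T_r)$. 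I verified the sign conventions ($\eps=\beta_r-\beta_{r+1}>0$), the second-order coefficients ($\tfrac{a^2}{2}\E(V-V')^2=\tfrac14\Var{V}\eps^2$ with $a=\eps/2$), and the observation that the exponential moments are partition-function ratios $Z(\beta_{r})/Z(\bar\beta)$, hence finite even at the hot end. Two minor points: your claim that the denominator is $1+\tfrac14\Var{V}\eps^2+O(\eps^4)$ needs fourth moments to justify the $O(\eps^4)$; under the stated third-moment hypothesis you only get $O(\eps^3)$, but that is all the theorem requires, so nothing breaks. And the midpoint-quadrature step needs $\lambda\in C^2$ with a second derivative bounded uniformly over the schedule, which (as you note) is the same domination estimate as the remainder bounds; this is where the cubed-log-likelihood integrability assumption is genuinely consumed. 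Your closing remark that the theorem needs only stationarity and not ELE is also correct and worth keeping, since the paper invokes ELE only for the round-trip rates \eqref{align:round-trip-rates}.
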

When $T_R=\Tmax$ is fixed, $\Lambda$ is understood to mean $\Lambda(\Tmax)$, the \emph{global communication barrier}.

Ignoring the error term in theorem \ref{theorem:syed-theorem-2}, the round trip rate $\tau_{DEO}$ is optimized when
\begin{align}
  \label{align:optimal-swap-rate}
1 - \swapprob\equiv \Lambda/R.
\end{align}
Consider running $k$ copies (chains, in our terminology) of REMC independently, with a total computational budget of $\bar R$. In other words, $\bar R = k\,R$. In this setup, \cite{Syed2022-ag} derives the optimal number of chains $k^\ast$, number of replicas $R^\ast$, and round trip rate $\tau_{DEO}^\ast$.
\begin{align}
  \label{align:optimal-num-replicas}
  \begin{split}
    R^\ast &= 2\Lambda + 1,
    \quad k^\ast = \frac{\bar{R}}{R^\ast} = \frac{\bar R}{2\Lambda + 1},
    \quad \tau_{DEO}^\ast = \frac{k^\ast}{2 + 4\Lambda} = \frac{\bar R}{2(2\Lambda+1)^2}.
  \end{split}
\end{align}
This optimum is achieved when $\swapprob\equiv\Lambda/R\approx0.5$, although they recommend $\swapprob>0.5$ to reduce the ELE violation.

An asymptotic expression for $\Lambda(T_R)$ is also provided in the~\iid~regime.
Here, one adds a parameter $d$, and with $N=d\cdot N'$, assumes the prior and likelihood act in an~\iid~manner on each of the $d$ copies of $\Rone^{N'}$. In other words,
\begin{align}
  \label{align:iid-decomposition}
      \logposterior &=\sum_{i=1}^d V(x_i),\quad \logprior = \sum_{i=1}^d \tilde V_0(x_i).
\end{align}
We re-state their proposition, extending it to posterior tempering.
\begin{proposition}[\cite{Syed2022-ag} proposition 4]
  \label{proposition:syed-proposition-4}
  Given \ref{align:iid-decomposition}, as $d\to\infty$, we have asymptotic convergence
  \begin{align*}
    \Lambda_d(T_R) &\stackrel{asy.}{\sim}
    \sqrt{\frac{1}{\pi}}\int_1^{T_R} \frac{\sigma(T)}{T^2}\d T,
  \end{align*}
  where, with $X(T)$ the tempered state,
  \begin{align*}
    \sigma^2(T) &= \left\{ 
    \begin{matrix}
      \Var{\log p(X(T)\g y)},&\quad \mbox{posterior tempering}\\
      \Var{\log p(y\g X(T))},&\quad \mbox{likelihood tempering}.
    \end{matrix}
    \right.
  \end{align*}
\end{proposition}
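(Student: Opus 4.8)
The plan is to follow the strategy of \cite{Syed2022-ag}: reduce the statement to the large-$d$ behaviour of the \emph{local} communication barrier $\lambda_d := \Lambda_d'$, then evaluate that local rate by a central limit argument combined with the thermodynamic fluctuation identity, and finally integrate. Throughout, write $\beta = 1/T$ for the inverse temperature, and let $V$ denote the per-replica potential: $V = -\loglikelihood$ in the likelihood-tempering case, so $\pi_\beta \propto \prior\,e^{-\beta V}$, and $V = -\logposterior$ in the posterior-tempering case, so $\pi_\beta \propto e^{-\beta V}$; under \eqref{align:iid-decomposition} this $V$ is a sum of $d$ \iid per-block potentials $V_1$, and $\sigma^2(T) = \Var{V(X(T))} = d\,\Var_{\pi_{1/T}}[V_1(X_1)]$ is exactly the full-state variance appearing in the statement.

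\emph{Step 1 (from swap rates to $\Lambda_d'$).} Apply Theorem \ref{theorem:syed-theorem-2} to two adjacent replicas at temperatures $T$ and $T+\Delta T$: $1 - \rmP[\swap] = \Lambda_d(T+\Delta T) - \Lambda_d(T) + O(\Delta T^3) = \Lambda_d'(T)\,\Delta T + O(\Delta T^2)$, so that $\Lambda_d(T_R) = \int_1^{T_R}\Lambda_d'(T)\dT$. It therefore suffices to show $\Lambda_d'(T)\stackrel{asy.}{\sim}\sigma(T)/(\sqrt{\pi}\,T^2)$ as $d\to\infty$ (uniformly on compact $T$-intervals) and then integrate from $1$ to $T_R$.

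\emph{Step 2 (swap probability as a tail event, then CLT).} By Proposition \ref{proposition:swap-probs-as-typical-set-overlap}, with $X\sim\pi_\beta$ and $X'\sim\pi_{\beta-\Delta\beta}$ drawn independently (where $\Delta\beta = \Delta T/T^2 + O(\Delta T^2) > 0$, since the hotter replica sits at the smaller inverse temperature), $\rmP[\swap] = 2\,\rmP[\,V(X) > V(X')\,] = 2\,\rmP[W<0]$ with $W := V(X') - V(X)$. The classical CLT applied to each block-sum gives $V(X)\approx\calN\!\left(d\,m_1(\beta),\,\sigma^2(T)\right)$ and $V(X')\approx\calN\!\left(d\,m_1(\beta-\Delta\beta),\,\sigma^2(T+\Delta T)\right)$, where $m_1(\beta) := \Exparg{\pi_\beta}{V_1(X_1)}$; by independence, $W$ is asymptotically Gaussian with mean $d[m_1(\beta-\Delta\beta) - m_1(\beta)]$ and variance $\sigma^2(T) + \sigma^2(T+\Delta T)$.

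\emph{Step 3 (fluctuation identity, Gaussian expansion, integrate).} Differentiating $\log Z_1(\beta)$, where $Z_1$ is the per-block normalizer, yields the standard identities $m_1(\beta) = -\tfrac{\d}{\d\beta}\log Z_1(\beta)$ and $m_1'(\beta) = -\Var_{\pi_\beta}[V_1(X_1)] = -\sigma^2(T)/d$, so $d[m_1(\beta-\Delta\beta) - m_1(\beta)] = \sigma^2(T)\,\Delta\beta + O(\Delta\beta^2)$ while $\sigma^2(T)+\sigma^2(T+\Delta T) = 2\sigma^2(T) + O(\Delta\beta)$. Inserting these into $\rmP[W<0] = \Phi\!\left(-\,\mathbb{E}[W]/\sqrt{\Var W}\right)$ and expanding $\Phi(-u) = \tfrac12 - u/\sqrt{2\pi} + O(u^3)$ gives
\begin{align*}
  1 - \rmP[\swap] = 1 - 2\,\rmP[W<0] \;\sim\; \frac{\sigma(T)}{\sqrt{\pi}}\,\Delta\beta \;=\; \frac{\sigma(T)}{\sqrt{\pi}\,T^2}\,\Delta T .
\end{align*}
Matching with Step 1 identifies $\Lambda_d'(T)\sim\sigma(T)/(\sqrt{\pi}\,T^2)$, and $\int_1^{T_R}$ yields the claim. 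The posterior-tempering case is verbatim the same with $V = -\logposterior$ and the corresponding $\sigma^2(T) = \Var{\logposterior\ \text{at}\ X(T)}$; the only structural difference is that $T_R<\infty$ is needed so $\pi_\beta$ stays proper and its moments finite.

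\emph{Main obstacle.} The algebra above is routine; the real work is rigor in Steps 2--3. The Gaussian approximation of $\rmP[W<0]$ must be made quantitative and uniform in $\beta$ over the temperature range (a Berry--Esseen or local-CLT estimate), and one must control the interchange of the $d\to\infty$ limit with the $\Delta T\to0$ limit defining $\Lambda_d'$ — equivalently, with the Riemann-sum/schedule-refinement limit implicit in Theorem \ref{theorem:syed-theorem-2} — so that the accumulated approximation error is of lower order. This is precisely where the assumed integrability of the cubed log-likelihood/log-posterior enters: it bounds the third-moment term behind both the $O(u^3)$ remainder and the density estimate needed for the local CLT. Controlling that interchange, rather than the thermodynamic computation, is the substantive content.
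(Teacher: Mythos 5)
The paper does not actually prove this proposition: it is restated from \cite{Syed2022-ag} (their Proposition 4), with the extension to posterior tempering justified only by the observation that the argument is identical once the potential is taken to be $-\logposterior$ rather than $-\loglikelihood$. Your sketch reconstructs exactly the derivation behind the cited result, and its algebra is correct: the fluctuation identity $m_1'(\beta)=-\mathrm{Var}_{\pi_\beta}[V_1]$, the Gaussian expansion of $\Phi$, and the change of variables $\d\beta=-\dT/T^2$ all combine to give the constant $1/\sqrt{\pi}$ as claimed, and your remark that posterior tempering needs $T_R<\infty$ for propriety matches the paper's own caveat in \eqref{align:tempered-densities}. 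You also correctly flag the genuine technical content (uniform Berry--Esseen control and the order of the $\Delta T\to0$ and $d\to\infty$ limits). One simplification worth noting: the interchange-of-limits worry in your Step 1 can be avoided entirely by first passing, for each fixed $d$, to the exact local barrier $\lambda_d(\beta)=\tfrac12\,\E\left[\,|V(X)-V(X')|\,\right]$ with $X,X'$ \iid from $\pi_\beta$ (this is the exact $\Delta\beta\to0$ limit underlying Theorem \ref{theorem:syed-theorem-2}), and only then applying the CLT with uniform integrability of $|V(X)-V(X')|$ (supplied by the third-moment hypothesis) to get $\lambda_d(\beta)\sim\sigma(1/\beta)/\sqrt{\pi}$; the two limits are then decoupled and only the integration over $\beta$ remains. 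With that adjustment your argument is essentially the proof in the cited reference.
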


In this~\iid~regime, $\sigma^2 = O(d)$ and hence $\Lambda = O(\sqrt{d})$. It follows that $\bar R / \tau_{DEO}^\ast = O(d)$ units of work are done to produce each sample making its way from the hottest replica to the target.

\subsubsection{Number of replicas and its relation to under-constrained problems}
\label{subsubsection:number-of-replicas-scaling}
This section makes its point by example, although results should apply more generally.
The example is a linear/Gaussian problem, with $A\in\RMN$ and $\mbox{Rank}(A) = M < N$:
\begin{align*}
    \prior &\propto \exp\left\{ -\frac{1}{2}x^T C_{pr}^{-1}x \right\},
    \quad \likelihood \propto \exp\left\{ -\frac{1}{2\sigma^2}\|Ax - y\|^2 \right\}.
\end{align*}
In this case, the log posterior is a sum of $N$ terms, whereas the log likelihood a sum of $M$. If these terms are independent enough, we expect proposition \ref{proposition:syed-proposition-4} and \eqref{align:optimal-num-replicas} to show the optimal number of replicas, $R^\ast$, is $= O(\sqrt{N})$ for posterior tempering, and $= O(\sqrt{M})$ for likelihood tempering. Figure \ref{fig:tempering-comparison} shows this relation holds for a toy problem.

More precise results can be obtained for unimodal distributions. This approach is partially justified by noting that if $R$ replicas are needed to exchange in unimodal density $\pi$, at least $R$ should be needed to exchange in a multimodal distribution where one mode looks like $\pi$.

Consider an arbitrary Gaussian posterior, and posterior tempering. Proposition \ref{proposition:swap-probs-as-typical-set-overlap} shows that, with $\chi^2_N$, $\tilde\chi^2_N$ two independent chi-square random variables,
\begin{align*}
  \swapprob = \rmP\left[ T_r\chi^2_N - T_{r+1}\tilde\chi^2_N > 0 \right].
\end{align*}
Since the mean of $T_r\chi^2_N - T_{r+1}\tilde\chi^2_N$ is $N(T_r - T_{r+1})< 0$, the probability is non-vanishing as $N\to\infty$ only if the standard deviation is of the same order. This is satisfied if $T_{r+1}/T_r= 1 + c/\sqrt{N}$, for some $c>0$ depending only on the desired acceptance probability. If $\Tmax<\infty$ is chosen ahead of time, $\Tmax = (T_2/T_1)^{R-1} = (1 + c/\sqrt{N})^{R-1}$. It follows that $R\propto\sqrt{N}\log\Tmax$. This line of reasoning can be extended to any distribution with constant heat capacity \cite{Kofke2002-my}.

To analyze the case of likelihood tempering, it will help to re-write the variance of the potentials from proposition \ref{proposition:syed-proposition-4} as specific heat like quantities:
\begin{lemma}
  \label{lemma:specific-heat-relation}
  Let $X(T)\sim\pi^{1/T}$ be a tempered state. Then,
  for posterior tempering,
  \begin{align*}
    \Var{\log p(X(T)\g y)} &= -T^2\frac{\d}{\dT}\Exp{\log p(X(T)\g y)},
  \end{align*}
  and for likelihood tempering,
  \begin{align*}
    \Var{\log p(y\g X(T))} &= -T^2\frac{\d}{\dT}\Exp{\log p(y \g X(T))}.
  \end{align*}
\end{lemma}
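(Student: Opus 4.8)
The plan is to prove both identities by the same mechanism: differentiating the log-normalizing constant (log partition function) of the tempered family with respect to the inverse temperature, exactly as one derives the fluctuation–dissipation relation $\Var{E} = -\partial_\beta \Exp{E}$ in statistical mechanics. I will treat the likelihood-tempering case in full; the posterior-tempering case is formally identical with $\log p(y\g x)$ replaced by $\logposterior$ and the base measure (i.e.\ the untempered prior factor) adjusted accordingly.

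First I would set up notation: write $\beta := 1/T$, let $V(x) := -\log p(y\g x)$ be the likelihood potential, and let $\pi_\beta(x) := Z(\beta)^{-1} p(x)\, e^{-\beta V(x)}$ be the likelihood-tempered density with normalizer $Z(\beta) := \int p(x)\, e^{-\beta V(x)}\dx$. Here $p(x)$ is the (untempered) prior, which plays the role of the base measure and does not depend on $\beta$. The state $X(T)$ in the lemma is distributed according to $\pi_\beta$. Then I would compute, assuming enough integrability to differentiate under the integral (this is exactly the ``integrability of the cubed log likelihood''-type hypothesis invoked earlier, which in particular gives dominated convergence for the first two $\beta$-derivatives):
\begin{align*}
  \frac{\d}{\d\beta}\log Z(\beta) &= -\Exparg{\beta}{V(X)}, \\
  \frac{\d^2}{\d\beta^2}\log Z(\beta) &= \Var_\beta\left\{ V(X) \right\}.
\end{align*}
Combining these two lines gives $\frac{\d}{\d\beta}\Exparg{\beta}{V(X)} = -\Var_\beta\{V(X)\}$.

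The final step is just the change of variables $\beta = 1/T$, so $\d/\d\beta = -T^2\, \d/\dT$, which turns $\frac{\d}{\d\beta}\Exparg{\beta}{V(X)} = -\Var_\beta\{V(X)\}$ into
\[
  \Var\left\{ \log p(y\g X(T)) \right\} = \Var_\beta\{V(X)\} = -\frac{\d}{\d\beta}\Exparg{\beta}{V(X)} = -T^2\frac{\d}{\dT}\Exp{\log p(y\g X(T))},
\]
using $\Var\{-V\} = \Var\{V\}$ and $\Exp{\log p(y\g X(T))} = -\Exparg{\beta}{V(X)}$. For posterior tempering I would repeat verbatim with $V(x) := -\logposterior$ and $Z(\beta) := \int e^{-\beta V(x)}\dx$ (Lebesgue base measure, since the prior is itself tempered), obtaining $\Var\{\logposterior\text{ at }X(T)\} = -T^2 \frac{\d}{\dT}\Exp{\logposterior\text{ at }X(T)}$. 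The only real obstacle is justifying differentiation under the integral sign — i.e.\ that $Z(\beta)$ is finite and twice continuously differentiable in a neighborhood of each relevant $\beta$ with the derivatives passing inside — which I would handle by citing the standing integrability assumptions (integrable cubed log-likelihood, analyticity of exponential families in the natural parameter) rather than re-deriving a dominated-convergence bound; everything else is the standard exponential-family cumulant computation.
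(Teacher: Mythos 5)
Your proposal is correct, and the sign bookkeeping through the change of variables $\beta = 1/T$ checks out in both cases. The paper actually states this lemma without proof, so there is nothing to compare against; your derivation via the first two $\beta$-derivatives of $\log Z(\beta)$ is the standard fluctuation--dissipation / cumulant computation that the authors evidently have in mind (the phrase ``specific heat like quantities'' in the surrounding text points exactly to it), with the only case-dependent detail being the choice of base measure (prior for likelihood tempering, Lebesgue for posterior tempering), which you handle correctly.
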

With this in hand, suppose the prior covariance $C_{pr}=I$ in \eqref{align:gaussian-inverse-problem}, and $\mbox{Rank}(A)=M\leq N$. Suppose $A$ has singular values $\{\alpha_n\}$. If the nonzero $\alpha_n$ are all equal, one can use lemma \ref{lemma:specific-heat-relation} and proposition \ref{proposition:syed-proposition-4} to derive an asymptotic relation (as $M\to\infty$) for likelihood tempering. Assuming the relation holds when $\alpha_n$ is non-constant, we have
\begin{align*}
  \Lambda &\stackrel{asy.}{\sim}
  \sqrt{\frac{1}{2\pi}} \int_0^1 \sqrt{\sum_{n=1}^N \frac{\alpha_n^4}{\left( \beta\alpha_n^2 + \sigma^2 \right)^2}}\d\beta
  \leq \sqrt{\frac{M}{2\pi}} \log \left[ \frac{1 + (\sigma/\alpha_1)^2}{(\sigma/\alpha_1)^2} \right].
\end{align*}
\eqref{align:optimal-num-replicas} now indicates that the optimal number of replicas should be $= O(\sqrt{M})$.

\subsection{Selecting parameters for REMC with HMC}
\label{subsection:hmc-params-with-remc}
The work of previous sections allow us to give concrete recommendations.

\subsubsection{Likelihood vs.~posterior tempering}
We usually prefer likelihood tempering. The reasons are that it (i) has better scaling properties for under-constrained problems (section \ref{subsubsection:number-of-replicas-scaling}), (ii) is stable no matter how large $\Tmax$ is, (iii) the hottest replica is close to the prior, facilitating the leapfrog heuristic of section \ref{subsubsection:selecting-number-of-leapfrog-steps-for-all-replicas}.
\begin{figure}[h]
  \begin{center}
    \includegraphics[width=0.9\textwidth]{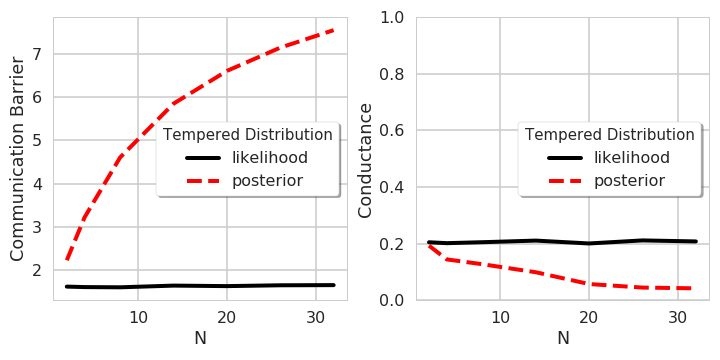}
  \end{center}
  \caption{
  {\bf Tempering comparison: } Communication barrier, $\Lambda$, and conductance in the bi-modal normal of \eqref{align:multi-model-toy-model}.  $N$ is increased while fixing $M=1$. These show the $\sqrt{N}$, $\sqrt{M}$ scaling of the communication barrier for posterior tempering discussed in section \ref{subsubsection:number-of-replicas-scaling}. The resultant conductance is much better for likelihood tempering.
  }
  \label{fig:tempering-comparison}
\end{figure}

Likelihood tempering is limited in that the highest temperature replica, is approximately the prior.
For that reason, if there is a possibility of modes far outside the typical set of the prior, posterior tempering should be used.

\subsubsection{Selecting the annealing schedule}
\label{subsubsection:selecting-replica-temperatures}
The annealing schedule is $1=T_1<T_2<\cdots<T_R=\Tmax$.  We select this in two stages. First we find $\Tmax$, and second we adjust the other temperatures.

$\Tmax$ must be hot enough so that the hottest replica can mix well and explore the proximity of all posterior modes.  The dimension $N$ need not play a role. For example, with our spectroscopy model, tempering changes the noise term $\sigma(x)^2$ (in \eqref{align:spectroscopic-likelihood}) to $T\sigma(x)^2$.  We therefore expect $\Tmax\propto \sigma^{-2}$ to be hot enough for non-vanishing conductance (see section \ref{subsection:conductance-basics}). The constant of proportionality will depend on the maximum residual $|y - \left[ \calI\varphi_\nu(x) \right]_m|$ on some path between modes.  We cannot expect to know this a-priori. Instead, during burn-in, we start with geometrically increasing temperatures. We then monitor $\Rhat$ (\cite{Vehtari2021-ai}) to see which replicas mix \emph{without} swapping. $\Tmax$ is set to the coldest temperature that was mixing during this burn-in phase.
For likelihood tempering, we \emph{could have} set $\Tmax=\infty$, but we found that to be less efficient.

After $T_R=\Tmax$ is chosen, we adjust $T_2,\ldots,T_{R-1}$ until $\swapprob$ is close to constant, as justified by \eqref{align:optimal-swap-rate}. We use an interpolation scheme as in \cite{Syed2022-ag}.

\subsubsection{Selecting step sizes for each replica}
\label{subsubsection:selecting-replica-step-sizes}
A good initial guess for the $r^{th}$ replica's step size, $h_r$, is $h_r\propto \sqrt{T_r}$, since this would be ideal for a Gaussian.  After every temperature change, step sizes can be adjusted by an iterative scheme \cite{Andrieu2008-xv}. It helps to have an initial guess for the new step sizes. This can be done by finding an interpolating function, $h = f(T)$, that is piecewise linear in $\sqrt{T}$.

\subsubsection{Selecting the number of leapfrog steps}
\label{subsubsection:selecting-number-of-leapfrog-steps-for-all-replicas}
To take advantage of batch operations on a single GPU, the number of leapfrog steps $\ell$ should be the same for all replicas (see section \ref{subsection:computational-efficiency}). Given likelihood tempering, the hottest replica will be similar to the prior, and should be much better conditioned than the target. We therefore expect the hottest replica will need far fewer leapfrog steps than the target.
Since we choose $\Tmax$ so that only the hottest replica is mixing well (without swapping), we should choose $\ell$ to facilitate mixing of this hottest replica.  This allows $\ell\propto1/h_R$, which is less than the $O(1 / h_1)$ needed for the target to mix well (without swapping).

As it turns out, we can reduce $\ell$ even further. Equation \eqref{align:round-trip-rates} indicates that only one out of $1 + \gamma$ samples produced by the hottest replica make their way to the target. These $1+\gamma$ intermediate samples give additional time for the hottest replica to mix. To use this opportunity, note that leapfrog integration travels for a time $\approx h_R\ell$, and if $h_R\ell$ is small, sampling approximates a random walk. The expected time traveled by these $1 + \gamma$ intermediate samples is therefore proportional to $h_R\ell\,\sqrt{1 + \gamma}$.  Assuming the hottest replica has largest scale of $\lambda_{R,1}$, we should choose $\ell$ so that $h_R\ell\sqrt{1 + \gamma} \approx \lambda_{R,1}\pi/2$, or
\begin{align}
  \label{align:leapfrog-heuristic}
  \ell  &\approx  \lambda_{R,1}\,\frac{\pi}{2}\,\frac{1}{h_R\,\sqrt{1+\gamma}}.
\end{align}

Reducing $\ell$ according to this \emph{leapfrog heuristic} means the integration length is less than ideal for every replica. As a result, state samples $X^j$ will certainly not be independent. We also expect the potential energy samples, $-\log[p(y\g X^j)]$ to lack independence as well. In other words, the ELE assumption of \cite{Syed2022-ag}, (see section \ref{subsubsection:existing-fundamental-results}), will be violated.
This can be visualized by plotting chain traces (see figure \ref{fig:U-traces-toy-model}). Swapping is turned off for these plots, since we want to visualize \emph{local} exploration.
\begin{figure}[h]
  \begin{center}
    \includegraphics[width=\textwidth]{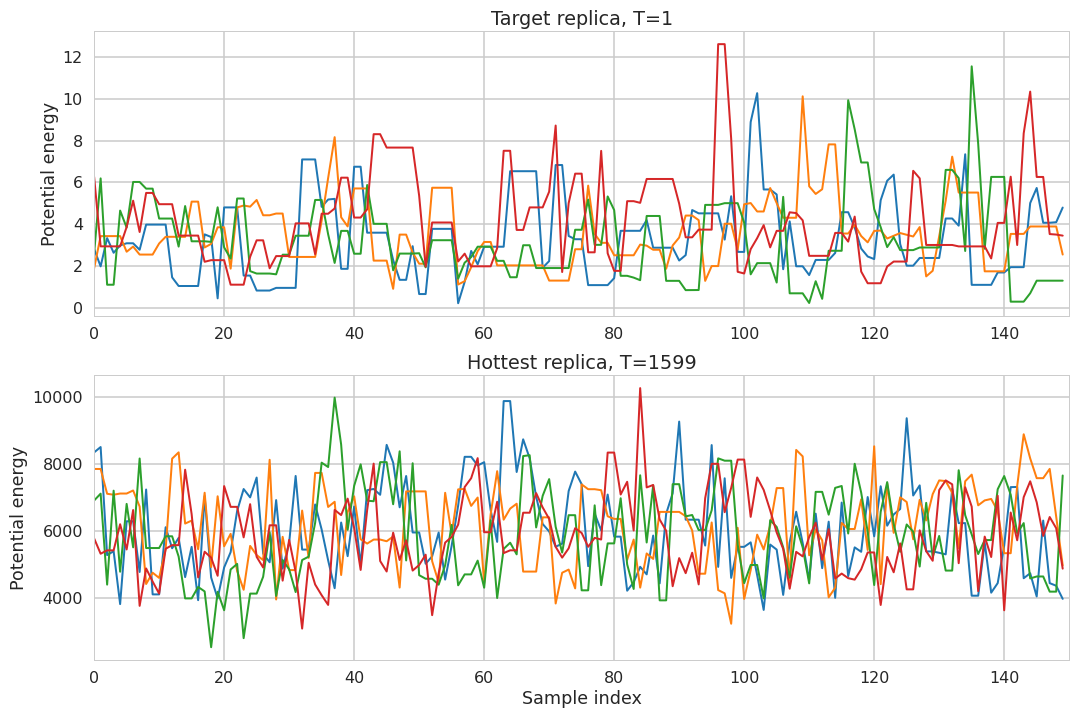}
  \end{center}
  \caption{
  {\bf Potential energy traces }from the toy model \eqref{align:multi-model-toy-model}. 
  Three chain traces of $-\log[p(y\g X)]$ are shown for the target $T=1$ and hottest $T=1599$ replicas. Autocorrelation length is longer in the target. In neither case are samples independent. Since the two modes in the toy model are identical, one cannot tell (by potential energy alone) if the chains are in different modes.
  }
  \label{fig:U-traces-toy-model}
\end{figure}

The toy model in figure \ref{fig:U-traces-toy-model} implemented \eqref{align:multi-model-toy-model} with $N=10$, $M=5$, 55 replicas, and noise $\sigma=0.025$.
We also set $\Tmax=\sigma^{-2}$ to ensure the hottest replica was barely mixing. Other temperatures were geometrically distributed. We estimated $\lambda_{R,1}$ using an exact formula for the largest scale of the tempered unimodal version of \eqref{align:multi-model-toy-model}.

We next used the same toy model, but with swapping turned on ($\swapprob\approx0.75$). Noise level $\sigma$ was swept from 0.00075 to 0.05 while keeping the number of replicas $R$ fixed at 55. This varies the schedule inefficiency $\gamma$.
See figure \ref{fig:leapfrog-multiplier-test-toy-model}, where following our leapfrog heuristic exactly (\emph{leapfrog multiplier} = 1) led to more efficient sampling than shorter (\emph{leapfrog multiplier} $<1$) or longer (\emph{leapfrog multiplier} $>1$) integration times.
\begin{figure}[h]
  \begin{center}
    \includegraphics[width=\textwidth]{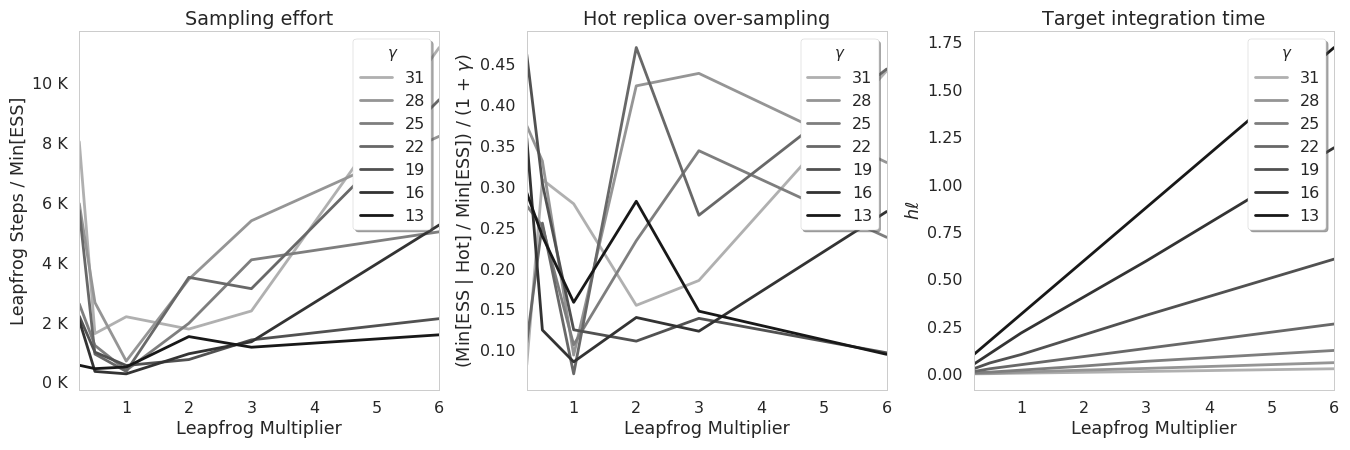}
  \end{center}
  \caption{
  {\bf Optimal number of leapfrog steps and $\gamma$, toy model:} The (likelihood) tempered posterior from \eqref{align:multi-model-toy-model} was run for a number of different $\sigma$.  Each $\sigma$ corresponds to a different schedule inefficiency $\gamma$. The x-axis is a \emph{leapfrog multiplier}.  The number of leapfrog steps taken is this multiplier times the heuristic \eqref{align:leapfrog-heuristic}.
  {\bf Left:} Sampling effort, as measured in leapfrog steps per $\miness$, is lowest when the leapfrog multiplier $=1$. So following the heuristic \eqref{align:leapfrog-heuristic} exactly was most efficient.
  {\bf Center:} The number of effective samples obtained by the hottest replica (in relation to the target, and $\gamma$) was 1/2 to 1/10 that predicted by \eqref{align:round-trip-rates}.  Perhaps extra effective samples were picked up along the ``hot to target'' trip.
  {\bf Right:} The target integration time, $h_1\ell$, is significantly less than would be required to efficiently explore the largest target scale (=1) without REMC. So (i) the leapfrog heuristic is saving a significant number of steps, (ii) the state samples are correlated, hinting that ELE is likely violated, (iii) even so, following our leapfrog heuristic was the most efficient choice.
  }
  \label{fig:leapfrog-multiplier-test-toy-model}
\end{figure}

In a problem where modes are not Gaussian, the optimal integration time will no longer be given by the $\pi/2$ heuristic, and the energy distribution may be such that an ELE violation is more problematic.
To test these effects, we used the multi-modal version of our spectroscopy model from section \ref{subsection:multi-modal-posteriors}.
Since our REMC implementation runs $R$ concurrent chains on GPU at once, memory usage increases $R$ times. We therefore limited $R$ to 150.
We used the parameter choices outlined in sections \ref{subsubsection:selecting-replica-temperatures}, \ref{subsubsection:selecting-replica-step-sizes} and assumed $\lambda_{R,1}=1$.
Figure \ref{fig:U-traces-real-model} shows potential energy traces, similar to the toy-model traces of figure \ref{fig:U-traces-toy-model}. Again we see correlated samples. However, with the spectroscopy model, different modes have different average potential energy for $T=1$.
\begin{figure}[h]
  \begin{center}
    \includegraphics[width=\textwidth]{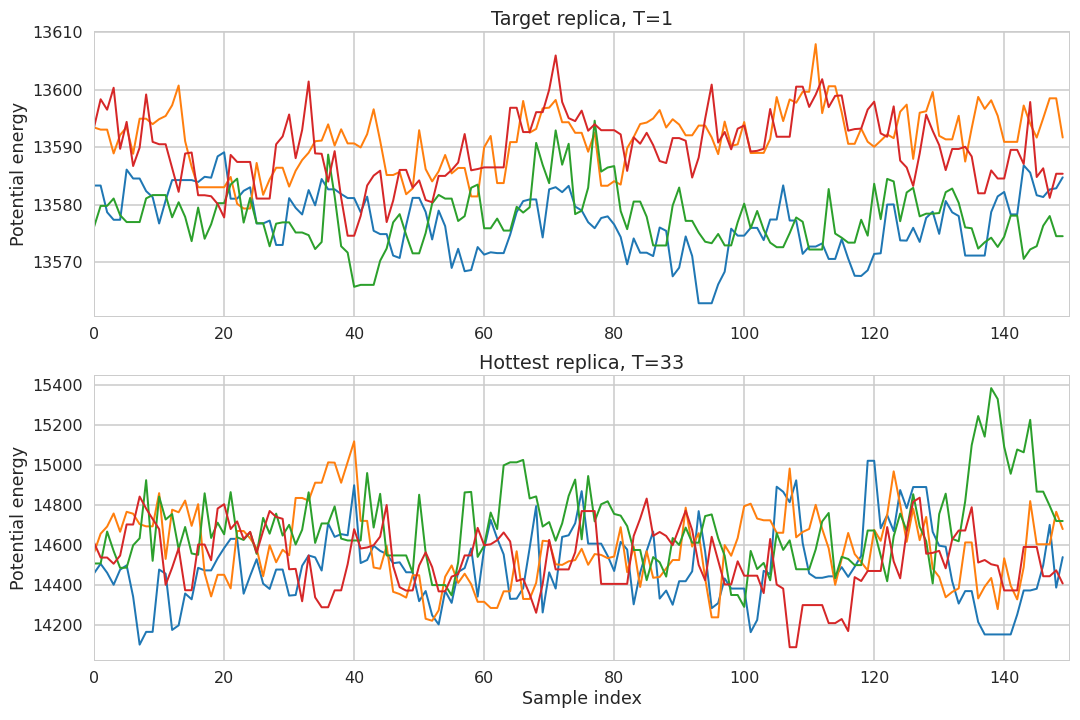}
  \end{center}
  \caption{
  {\bf Potential energy traces }from the spectroscopy model.
  Three chain traces shown for the target $T=1$ and hottest $T=33$ replicas. Both replicas exhibit autocorrelation within each chain. The target replica's chains also exhibit different average potential energy. This happens since the target replica is unable to jump between modes, and the modes have different potential energy.
The autocorrelation of each chain (not shown) is longer for the target $T=1$ than hottest replica $T=33$. This is is expected since both use the same number of leapfrog steps, but the target must use a smaller step size.
  }
  \label{fig:U-traces-real-model}
\end{figure}

As with the toy model, we used a \emph{leapfrog multiplier} to test variations on the leapfrog heuristic \eqref{align:leapfrog-heuristic}. Sampling is once again most efficient if the heuristic is followed exactly (Fig.~\ref{fig:leapfrog-multiplier-test-real-model}). We also compared setting $\Tmax=\infty$ in addition to the coldest temperature that mixes without swapping. $\Tmax=\infty$ was much less efficient, supporting the recommendations given in section \ref{subsubsection:selecting-replica-temperatures}.
\begin{figure}[h]
  \begin{center}
    \includegraphics[width=\textwidth]{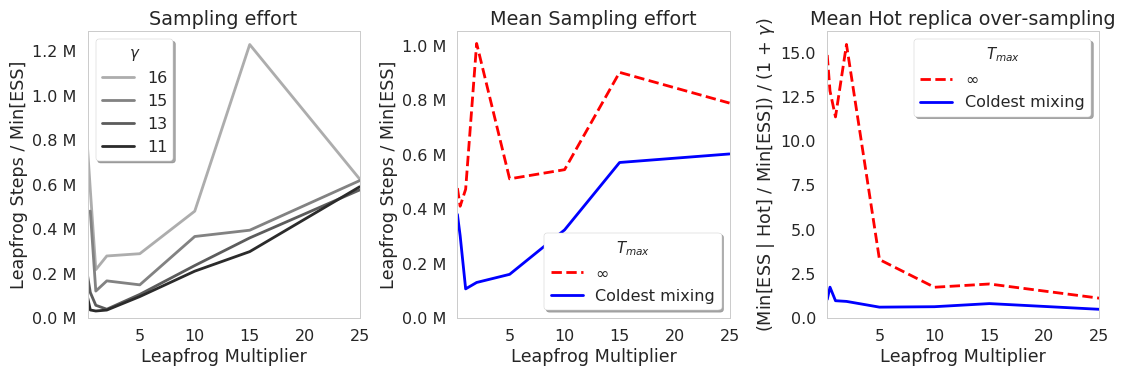}
  \end{center}
  \caption{
  {\bf Efficiency, number of leapfrog steps, and $\gamma$, in the spectroscopy model:} The (likelihood) tempered multi-modal posterior from section \ref{subsection:multi-modal-posteriors}. X and Y axis are as in figure \ref{fig:leapfrog-multiplier-test-toy-model}. The ``Mean'' is taken over experiments with different $\gamma$.
  {\bf Left:} Following the leapfrog heuristic was the most efficient choice.
  {\bf Center:}
  Experiments were also done with $\Tmax=\infty$ in addition to the coldest temperature that mixes without swapping (recommended in section \ref{subsubsection:selecting-replica-temperatures}).
  When $\Tmax=\infty$, sampling was less efficient, and the leapfrog heuristic selected too few steps to be efficient.
  {\bf Right:} When $\Tmax=\infty$, the ratio of (effective) samples drawn by the hottest replica to that of the target was greater than $1+\gamma$. This is unexpected and unexplained. For both $\Tmax$ values, once $\ell$ is high enough, \eqref{align:round-trip-rates} is satisfied (the $Y$ value is $\approx1$), presumably because ELE is no longer violated.
  }
  \label{fig:leapfrog-multiplier-test-real-model}
\end{figure}

\section*{Acknowledgements}
The author's would like to acknowledge the entire TensorFlow probability team at Google. In particular, Colin Carroll's feedback was helpful in preparation. Also, Saifuddin Syed was helpful in confirming the correctness of our re-framing of results from \cite{Syed2022-ag}.

\appendix

\section{Supporting Lemmas}
\label{appendix:lemmas}

\subsection{Equilbration of matrices}
\label{appendix:subsection:equilibration}
The equilibration of a matrix $A$ is defined in \cite{Van_der_Sluis1969-vh} as a rescaling of the rows $A_j$ such that $\|A_j\|\equiv1$, in some vector norm.  Since the preconditioned covariance $(D^{-1}L)(D^{-1}L)^T$ has ones on the diagonal, $D^{-1}L$ is equilibrated in the L2 norm. They go on to prove that equilibration is a near optimal diagonal scaling. In this section we follow their lead, re-proving pieces from scratch in order to avoid unwinding their more general, and technical, results.

Below, $\|x\|_2$ is the L2 norm on vectors,$\|A\|_2 := \max_{\|x\|_2=1}x\cdot Ax$ is the spectral norm on matrices. This is upper bounded by the Frobenius norm $\|A\|_F^2 = \sum_{i,j}A_{ij}^2$.
In the following,  $A_j$ denotes the $j^{th}$ row of matrix $A$.

A key to \cite{Van_der_Sluis1969-vh} establishing their results is the notion of monotonicity of some norms. We prove here the only case we need.
\begin{lemma}
  \label{lemma:schatten-norm-monotonicity}
  Let $G\in\RNN$ be diagonal with non-negative entries.
  For any $A\in\RNN$,
  \begin{align*}
    \|AG\|_{S^4} \geq \|A\|_{S^4} \min_n\left\{ G_{nn} \right\}.
  \end{align*}
\end{lemma}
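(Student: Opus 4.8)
The plan is to reduce the Schatten-$4$ inequality to a pointwise comparison of singular values, and then to read that comparison off from the Loewner (positive semidefinite) order. Set $g := \min_n\{G_{nn}\}$. If $g=0$ the right-hand side is zero and there is nothing to prove, so I would assume $g>0$. The claim I would establish is that, with singular values listed in decreasing order,
\begin{align*}
  \sigma_n(AG) \geq g\,\sigma_n(A), \qquad n=1,\ldots,N.
\end{align*}
Granting this, raising to the fourth power and summing over $n$ gives, by the definition \eqref{align:schatten-norm}, $\|AG\|_{S^4}^4 \geq g^4\|A\|_{S^4}^4$, and a fourth root finishes the proof. Exactly the same reduction works for any Schatten norm, or indeed any unitarily invariant norm, so nothing is special about the exponent $4$.

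To prove the pointwise bound I would pass to Gram matrices. Since $G$ is diagonal, $G=G^T$ and $GG^T=G^2$, hence $(AG)(AG)^T = AG^2A^T$. The diagonal matrix $G^2$ has entries $G_{nn}^2 \geq g^2$, so $G^2 - g^2 I \succeq 0$, i.e.\ $G^2 \succeq g^2 I$; conjugating by $A$ preserves the Loewner order (since $x^T(AG^2A^T - g^2AA^T)x = (A^Tx)^T(G^2-g^2I)(A^Tx)\geq 0$), giving $AG^2A^T \succeq g^2 AA^T$. Now I would invoke Weyl monotonicity of ordered eigenvalues (immediate from the Courant--Fischer min--max formula): for Hermitian $P \succeq Q$ one has $\lambda_n(P) \geq \lambda_n(Q)$ for every $n$. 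Applied with $P = AG^2A^T$ and $Q = g^2AA^T$ this yields $\lambda_n(AG^2A^T) \geq g^2\lambda_n(AA^T)$, which is precisely $\sigma_n(AG)^2 \geq g^2\sigma_n(A)^2$ because $\sigma_n(X)^2 = \lambda_n(XX^T)$.

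There is no real obstacle here: the entire content is the choice to work with the Gram matrix $AG^2A^T$ together with eigenvalue monotonicity under the Loewner order. If one prefers to avoid quoting Weyl's theorem, the pointwise bound can instead be obtained directly from the variational formula $\sigma_n(AG) = \max_{\dim V=n}\min_{0\neq v\in V}\|AGv\|/\|v\|$: taking $V = G^{-1}V_0$, where $V_0$ is the span of right singular vectors of $A$ realizing $\sigma_n(A)$, and using $\|G^{-1}w\| \leq g^{-1}\|w\|$, one gets $\|AGv\|/\|v\| = \|Au\|/\|G^{-1}u\| \geq g\,\|Au\|/\|u\| \geq g\,\sigma_n(A)$ for all $v=G^{-1}u\in V$, hence the same estimate. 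Either way the proof is short, and I would present the Gram-matrix version as the cleanest.
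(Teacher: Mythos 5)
Your proof is correct, but it takes a genuinely different route from the paper's. The paper works entirely at the level of traces: using cyclicity and $G=G^T$ it writes $\|AG\|_{S^4}^4 = \Trace{G^2A^TAG^2A^TA}$ and then twice applies the elementary bound $\Trace{GU}=\sum_n G_{nn}U_{nn}\geq \min_n\{G_{nn}\}\Trace{U}$ for positive semi-definite $U$, peeling off one factor of $\min_n\{G_{nn}^2\}$ at a time. You instead prove the strictly stronger pointwise statement $\sigma_n(AG)\geq g\,\sigma_n(A)$ for every $n$, by observing $G^2\succeq g^2I$, conjugating to get $AG^2A^T\succeq g^2AA^T$, and invoking Weyl monotonicity of ordered eigenvalues. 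Each approach has something to recommend it: the paper's argument is completely self-contained (nothing beyond linearity and cyclicity of the trace, though it does implicitly rely on the intermediate matrices such as $A^TAG^2A^TA$ being positive semi-definite so the trace bound applies at each stage), while yours establishes singular-value domination and therefore gives the conclusion for \emph{every} Schatten norm --- indeed every unitarily invariant norm --- at once, which would also cover the spectral-norm inequality $\|G^{-1}A\|_2\geq \max_j|G_{jj}^{-1}|\,\|A\|_2/\sqrt{K}$ used alongside this lemma in Lemma \ref{lemma:equilibration-almost-optimal}. The cost is quoting Courant--Fischer/Weyl rather than a one-line trace identity. Your reduction of the $g=0$ case and your alternative variational argument via $V=G^{-1}V_0$ are both sound.
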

\begin{proof}
  We have
  \begin{align*}
    \|AG\|_{S^4}^4
    &= \Trace{ (AG)(AG^T)(AG)(AG)^T}
    = \Trace{ G^2 A^TA G^2 A^TA}.
  \end{align*}
  The result then follows by repeatedly applying (to positive semi-definite $U$)
  \begin{align*}
    \Trace{ GU }
    &= \sum_n G_{nn} U_{nn}
    \geq \min_n \left\{ G_{nn} \right\} \Trace{U}.
  \end{align*}
\end{proof}

\begin{lemma}
  \label{lemma:l2-for-k-nonzeros}
  Let $A\in\RNN$ be equilibrated in the L2 norm.  Suppose $AA^T$ has $K$ nonzero elements in every row/column. Then $\|A\|_2\leq \sqrt{K}$.
\end{lemma}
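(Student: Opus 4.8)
The plan is to pass to the symmetric positive semidefinite matrix $AA^T$ and apply the Gershgorin circle theorem, exactly as in the proof of Proposition \ref{proposition:diag-preconditioning-for-diagonally-dominant}. First I would record the two facts that make this work. Since $A$ is equilibrated in the L2 norm, every row $A_i$ has $\|A_i\|_2 = 1$, hence the diagonal entries satisfy $(AA^T)_{ii} = A_i\cdot A_i = 1$; in particular the diagonal of $AA^T$ is never zero. Second, by Cauchy--Schwarz, every off-diagonal entry obeys $|(AA^T)_{ij}| = |A_i\cdot A_j| \le \|A_i\|_2\|A_j\|_2 = 1$.

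Next I would count nonzeros in a row of $AA^T$. By hypothesis each row has exactly $K$ nonzero entries, and one of them is the (always nonzero) diagonal entry, so there are at most $K-1$ nonzero off-diagonal entries in row $i$. Combining with the Cauchy--Schwarz bound, the Gershgorin radius of row $i$ satisfies $\sum_{j\neq i}|(AA^T)_{ij}| \le K-1$. Then I would invoke Gershgorin: every eigenvalue $\lambda$ of $AA^T$ lies in some disc centered at $(AA^T)_{ii}=1$ of radius at most $K-1$, so $\lambda \le 1 + (K-1) = K$. Since $AA^T$ is positive semidefinite, $\|A\|_2^2 = \lambda_{\max}(AA^T) \le K$, i.e.\ $\|A\|_2 \le \sqrt{K}$, as claimed.

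I do not foresee a serious obstacle here; the only point requiring care is the bookkeeping that the diagonal of $AA^T$ is genuinely nonzero, so that ``$K$ nonzeros per row'' really leaves at most $K-1$ entries off the diagonal --- this is exactly where equilibration in the L2 norm (rather than some weaker normalization) is used. If one preferred to avoid Gershgorin, an alternative is the elementary bound $\|AA^T\|_2 \le \|AA^T\|_\infty$ applied to a matrix with at most $K$ entries of magnitude $\le 1$ per row, which yields the same constant; but the Gershgorin route is the cleanest and matches the style used elsewhere in the paper.
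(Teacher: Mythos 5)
Your proof is correct, but it takes a different route from the paper's. You bound $\lambda_{\max}(AA^T)$ by Gershgorin (equivalently, by the row-sum norm of the symmetric matrix $AA^T$), using that the diagonal entries equal $1$ and the at most $K-1$ nonzero off-diagonal entries per row are each at most $1$ in magnitude by Cauchy--Schwarz; this gives $\|A\|_2^2 = \lambda_{\max}(AA^T) \le 1 + (K-1) = K$ directly. The paper instead passes through the Frobenius norm, writing $\|A\|_2^4 = \|AA^T\|_2^2 \le \|AA^T\|_F^2$ and then bounding the number of nonzero entries of $AA^T$ by $K^2$. Your argument is arguably the more robust of the two: the Gershgorin/row-sum bound only needs the \emph{per-row} sparsity and entrywise bounds, whereas the Frobenius route requires controlling the \emph{total} count of nonzero pairs $(i,j)$ --- a quantity that, with $K$ nonzeros in each of $N$ rows, is a priori $NK$ rather than $K^2$, so the paper's final counting step deserves more justification than it is given. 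Your bookkeeping about the diagonal (that equilibration forces $(AA^T)_{ii}=1\neq 0$, so only $K-1$ off-diagonal entries remain) is exactly the point where the hypothesis is used, and it even yields the marginally sharper constant $\sqrt{K}$ via $1+(K-1)$ rather than $K$ off-diagonal terms. Either way the conclusion $\|A\|_2\le\sqrt{K}$ needed by Lemma \ref{lemma:equilibration-almost-optimal} holds, and your derivation stands on its own.
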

\begin{proof}
  Since $(AA^T)_{ij} = A_i\cdot A_j$, our condition on $AA^T$ implies that, for every $i$, there are at most $K$ rows $A_j$ such that $A_i\cdot A_j \neq 0$. For these rows, Cauchy-Schwarz gives us $A_i\cdot A_j\leq 1$.  Therefore,
  \begin{align*}
    \|A\|_2^4 = \|AA^T\|_2^2 \leq \|AA^T\|_F^2 = \sum_{i,j} (A_i\cdot A_j)^2
    \leq \sum_{\left\{ (i, j) \st A_i\cdot A_j\neq0 \right\}} 1
    \leq K^2.
  \end{align*}
  The result follows by taking fourth roots.
\end{proof}

We now have our main equilibration result
\begin{lemma}
  \label{lemma:equilibration-almost-optimal}
  Suppose $A$ is equilibrated in the L2 norm, with $AA^T$ having at most $K$ nonzero elements in every row. Then for any diagonal matrix $G$,
  \begin{align*}
    \W(A)
    = \|A\|_2 \|A^{-1}\|_{S^4}
    \leq \sqrt{K}\,\|G^{-1}A\|_2 \|A^{-1}G\|_{S^4}
    = \sqrt{K}\,\W(G^{-1}A).
  \end{align*}
\end{lemma}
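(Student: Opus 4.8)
The plan is to assemble the statement directly from the three supporting lemmas already established, reducing the two factors of $\W$ one at a time. First I would dispose of the spectral-norm factor: since $A$ is equilibrated in the $L^2$ norm and $AA^T$ has at most $K$ nonzero entries in each row (hence each column, by symmetry), Lemma \ref{lemma:l2-for-k-nonzeros} gives $\|A\|_2 \leq \sqrt{K}$. Therefore $\W(A) = \|A\|_2\,\|A^{-1}\|_{S^4} \leq \sqrt{K}\,\|A^{-1}\|_{S^4}$, and it suffices to prove that $\|A^{-1}\|_{S^4} \leq \|G^{-1}A\|_2\,\|A^{-1}G\|_{S^4}$ for every diagonal $G$. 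Replacing $G$ by its entrywise absolute value changes neither $\|G^{-1}A\|_2$ nor $\|A^{-1}G\|_{S^4}$ (these amount to sign flips of rows and columns respectively, which leave singular values unchanged), so I may assume $G$ has strictly positive diagonal entries.

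Next I would bound the two remaining factors in opposite directions so the dependence on $G$ cancels. For $\|A^{-1}G\|_{S^4}$, Lemma \ref{lemma:schatten-norm-monotonicity} applied with $A^{-1}$ in place of $A$ yields $\|A^{-1}G\|_{S^4} \geq \|A^{-1}\|_{S^4}\,\min_n G_{nn}$. For $\|G^{-1}A\|_2$, I would use the elementary fact that the spectral norm dominates the Euclidean norm of each row of a matrix (equivalently, each column of its transpose): the $i$-th row of $G^{-1}A$ is $G_{ii}^{-1}A_i$, whose norm is $G_{ii}^{-1}\|A_i\|_2 = G_{ii}^{-1}$ because equilibration means $\|A_i\|_2 = 1$; taking the largest such row norm gives $\|G^{-1}A\|_2 \geq \max_i G_{ii}^{-1} = (\min_i G_{ii})^{-1}$.

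Multiplying the two bounds, the factor $\min_n G_{nn}$ cancels exactly against $(\min_i G_{ii})^{-1}$, leaving $\|G^{-1}A\|_2\,\|A^{-1}G\|_{S^4} \geq \|A^{-1}\|_{S^4}$. Chaining this with the first step completes the argument: $\W(A) \leq \sqrt{K}\,\|A^{-1}\|_{S^4} \leq \sqrt{K}\,\|G^{-1}A\|_2\,\|A^{-1}G\|_{S^4} = \sqrt{K}\,\W(G^{-1}A)$.

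I do not expect any serious obstacle here, since the three lemmas do all the heavy lifting. The one nontrivial idea is to treat the product $\|G^{-1}A\|_2\,\|A^{-1}G\|_{S^4}$ asymmetrically — a crude row-norm lower bound on the spectral factor, Schatten monotonicity on the other — precisely so that the unknown diagonal scaling $G$ drops out; the remaining points (reducing to positive $G$, and recalling $\|M\|_2 \geq \max_i \|M_i\|_2$) are routine bookkeeping.
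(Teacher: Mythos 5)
Your proposal is correct and follows essentially the same route as the paper's proof: both rest on Lemma \ref{lemma:l2-for-k-nonzeros} for $\|A\|_2\leq\sqrt{K}$, Lemma \ref{lemma:schatten-norm-monotonicity} for $\|A^{-1}G\|_{S^4}\geq\|A^{-1}\|_{S^4}\min_n|G_{nn}|$, and the row-norm lower bound $\|G^{-1}A\|_2\geq\max_j|G_{jj}^{-1}|$, with the same cancellation of the diagonal scaling. The only cosmetic difference is that you peel off the factor $\sqrt{K}$ first and reduce to positive $G$ explicitly, whereas the paper carries $\|A\|_2/\sqrt{K}$ through a single chain of inequalities and works with $|G_{jj}|$ throughout.
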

\begin{proof}
  For any matrix $F$, $\|F\|_2 \geq \max_j \|F_j\|_2$. Therefore, using lemma \ref{lemma:l2-for-k-nonzeros},
  \begin{align*}
    \|G^{-1}A\|_2
    \geq \max_j \|(G^{-1}A)_j\|_2
    = \max_j |G_{jj}^{-1}|
    = \max_j |G_{jj}^{-1}| \frac{\|A\|_2}{\sqrt{K}}.
  \end{align*}
  Likewise, lemma \ref{lemma:schatten-norm-monotonicity} tells us $\|A^{-1}G\|_{S^4}\geq \|A^{-1}\|_{S^4}\min_j |G_{jj}|$. Therefore,
  \begin{align*}
    \|G^{-1}A\|_2 \|A^{-1}G\|_{S^4}
    \geq\max_j |G_{jj}^{-1}| \frac{\|A\|_2}{\sqrt{K}} \|A\|_{S^4}\min_j |G_{jj}|
    =\frac{\|A\|_2}{\sqrt{K}} \|A\|_{S^4}.
  \end{align*}
  Rearranging, we have proved lemma \ref{lemma:equilibration-almost-optimal}.
\end{proof}

\subsection{Chi-square bounds}
\begin{lemma}
  \label{lemma:chi-square-bound}
  If $Z_s\sim\calN(0, 1)$ are~\iid, then for $\eps\in(0, 1)$,
  \begin{align*}
    \rmP
    \left[
    \left|
    \frac{1}{S}\sum_{s=1}^S Z_s^2 - 1 - \left( \frac{1}{S}\sum_{s=1}^S Z_s\right)^2
    \right|
    \geq \eps
    \right]
    &\leq 3 \exp\left\{ \frac{-S \eps^2}{25} \right\}.
  \end{align*}
\end{lemma}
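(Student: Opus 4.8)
The plan is to recognize the bracketed quantity as a (biased) sample variance and reduce the statement to a one-dimensional chi-square tail bound. Writing $\bar Z := \frac{1}{S}\sum_{s=1}^S Z_s$, one has the algebraic identity
\begin{align*}
  \frac{1}{S}\sum_{s=1}^S Z_s^2 - 1 - \bar Z^2 = \left(\frac{1}{S}\sum_{s=1}^S (Z_s - \bar Z)^2\right) - 1 =: V - 1 .
\end{align*}
Since the $Z_s$ are~\iid~standard normal, Cochran's theorem gives $SV = \sum_{s=1}^S (Z_s - \bar Z)^2 \sim \chi^2_{S-1}$, a chi-square variable with $S-1$ degrees of freedom. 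So it is enough to bound $\rmP[\,|\chi^2_{S-1} - S| \geq S\eps\,]$.

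Before doing so I would dispose of the trivial regime. A probability never exceeds $1$, so there is nothing to prove unless $3\exp\{-S\eps^2/25\} < 1$, i.e.\ unless $S\eps^2 > 25\log 3$. Because $\eps\in(0,1)$, this inequality forces $S\eps = S\eps^2/\eps > S\eps^2 > 25\log 3 > 1$ and in particular $S > 27$, so every degree-of-freedom and deviation quantity used below is strictly positive and in range; I assume $S\eps^2 > 25\log 3$ from here on.

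The core step is then a one-line reduction to a standard inequality. Writing $\chi^2_{S-1} - S = (\chi^2_{S-1} - (S-1)) - 1$, the event $\{|\chi^2_{S-1} - S| \geq S\eps\}$ implies $\{|\chi^2_{S-1} - (S-1)| \geq S\eps - 1\}$ by the reverse triangle inequality. I would then invoke a standard chi-square concentration bound, e.g.\ the Laurent--Massart inequalities $\rmP[X \geq m + 2\sqrt{mx} + 2x]\leq e^{-x}$ and $\rmP[X \leq m - 2\sqrt{mx}]\leq e^{-x}$ for $X\sim\chi^2_m$, applied with $m = S-1$ and deviation $t := S\eps - 1$. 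Taking $x = t^2/(8m)$ in the upper estimate — legitimate since $t \leq m$ is equivalent to $\eps \leq 1$, so that $2\sqrt{mx} + 2x \leq (2^{-1/2} + 1/4)\,t < t$ — and $x = t^2/(4m)$ in the lower estimate yields
\begin{align*}
  \rmP\left[\,|\chi^2_{S-1} - (S-1)| \geq t\,\right] \;\leq\; 2\exp\left\{ -\frac{t^2}{8(S-1)} \right\} \;=\; 2\exp\left\{ -\frac{(S\eps - 1)^2}{8(S-1)} \right\} .
\end{align*}

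It remains to compare exponents, and this numerical bookkeeping is the only mildly delicate part of the argument: the constant $25$ in the statement must absorb both the shift between $S$ and $S-1$ degrees of freedom and the slack discarded in the trivial regime. Concretely, since $2 \leq 3$ it suffices to check $\tfrac{(S\eps-1)^2}{8(S-1)} \geq \tfrac{S\eps^2}{25}$, i.e.\ after clearing denominators $25(S\eps-1)^2 \geq 8S(S-1)\eps^2$, which expands to $17 S^2\eps^2 + 8 S\eps^2 + 25 \geq 50 S\eps$; this holds because $17 S^2\eps^2 = 17 S\,(S\eps^2) \geq 17 S \cdot 25\log 3 > 50 S \geq 50 S\eps$, using the second-step bound on $S\eps^2$ and $\eps < 1$. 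An alternative avoiding Cochran's theorem is to control $\frac1S\sum Z_s^2 - 1$ and $\bar Z^2$ by separate chi-square tails, the latter via $S\bar Z^2 \sim \chi^2_1$ together with $\bar Z^2 \geq 0$; but then one must work harder to keep the leading constant at $3$ rather than $4$, so the route through Cochran's theorem is cleaner.
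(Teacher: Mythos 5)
Your proof is correct, but it takes a genuinely different route from the paper's. You collapse the bracketed quantity into a single random variable via the identity $\frac{1}{S}\sum_s Z_s^2 - \bar Z^2 = \frac{1}{S}\sum_s (Z_s-\bar Z)^2$ and Cochran's theorem ($\sum_s(Z_s-\bar Z)^2\sim\chi^2_{S-1}$), then apply the Laurent--Massart tails once to a $\chi^2_{S-1}$ variable, absorbing the shift from $S$ to $S-1$ degrees of freedom and the $-1$ in the deviation through the explicit reduction to the nontrivial regime $S\eps^2>25\log 3$. The paper instead keeps the two pieces separate: it bounds $\rmP[|\frac{1}{S}\sum_s Z_s^2-1|\geq\eps/2]$ by $2e^{-S\eps^2/25}$ using the same Laurent--Massart inequalities applied to $\chi^2_S$, bounds $\rmP[\bar Z^2\geq\eps/2]$ by $e^{-S\eps/4}$ via a Bernstein/normal tail bound, and union-bounds the three terms, which is where the constant $3$ comes from. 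Your approach buys a slightly sharper conclusion (a leading constant of $2$ rather than $3$) and avoids juggling two different concentration inequalities, at the cost of invoking Cochran's theorem and doing noticeably more numerical bookkeeping (the $S\eps^2>25\log 3$ device and the quadratic inequality $17S^2\eps^2+8S\eps^2+25\geq 50S\eps$); the paper's split is more elementary and each substitution into the tail bounds is clean, which is presumably why they accept the looser constant. All of your intermediate claims check out: the reverse triangle inequality step is valid because $S\eps-1>0$ in the nontrivial regime, the choices $x=t^2/(8m)$ and $x=t^2/(4m)$ do satisfy $2\sqrt{mx}+2x\leq t$ and $2\sqrt{mx}=t$ respectively given $t\leq m$ (equivalently $\eps\leq1$), and the final exponent comparison holds under your standing assumption.
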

\begin{proof}
  In the corollary to lemma 1 in \cite{Laurent2000-xw}, they establish, for $\chi^2_S$ a chi-square random variable with $S$ degrees of freedom, and $c>0$,
  \begin{align}
    \label{align:chi-square-tail-bound}
    \begin{split}
      \rmP[\chi^2_S - S\geq 2 \sqrt{St} + 2 t] & \leq e^{-t},\\
      \rmP[\chi^2_S - S\leq -2 \sqrt{St}] & \leq e^{-t}.
    \end{split}
  \end{align}
  The second inequality in \eqref{align:chi-square-tail-bound} directly gives us, if $\eps/2 = 2 \sqrt{c/S}$,
  \begin{align*}
    \rmP\left[ \frac{1}{S}\sum_{s=1}Z_s^2 - 1 \leq -\frac{\eps}{2} \right]
    &= \rmP\left[ \chi_S^2 - S \leq -2\sqrt{Sc} \right]
    \leq e^{-c}
    = e^{-S\eps^2/16}.
  \end{align*}
  To use the first inequality in \eqref{align:chi-square-tail-bound}, we start with the substitution $\eps/2 = (5/2)\sqrt{c/S}$, which implies $\sqrt{c/S} = \eps/5 < 1/4$, so that $(5/4) \geq 1 + \sqrt{c/S}$:
  \begin{align*}
    \rmP \left[ \frac{1}{S}\sum_{s=1}^S Z_s^2 - 1 \geq \frac{\eps}{2} \right]
    &= \rmP \left[ \frac{\chi^2_S - S}{S} \geq 2\frac{5}{4}\sqrt{\frac{c}{S}} \right] \\
    &\leq \rmP \left[ \frac{\chi^2_S - S}{S} \geq 2\left( 1 + \sqrt{\frac{c}{S}} \right)\sqrt{\frac{c}{S}} \right] \\
    &= \rmP \left[ \chi^2_S - S \geq 2\sqrt{Sc} + 2c \right] \\
    &\leq e^{-c} \\
    &= e^{-S\eps^2/25}.
  \end{align*}
  Combining, we have
  \begin{align}
    \label{align:centered-chi-square-tail-bound}
    \rmP\left[
      \left|
        \frac{1}{S}\sum_{s=1}^S Z_s^2 - 1
      \right|
      \geq \frac{\eps}{2}
    \right]
    \leq 2 e^{-S\eps^2/25}.
  \end{align}

  Next, a Bernstein inequality gives the standard Normal tail bound
  \begin{align}
    \label{align:normal-related-tail-bound}
    \rmP\left[ 
      \left( \frac{1}{S}\sum_{s=1}^S Z_s \right)^2 \geq \frac{\eps}{2}
    \right]
    \leq e^{-S\eps/4}.
  \end{align}
  Combining \eqref{align:centered-chi-square-tail-bound} and \eqref{align:normal-related-tail-bound}, we have the result.
\end{proof}

\medskip

\bibliographystyle{unsrt}



\end{document}